\documentclass[12pt]{article}
\usepackage{amsmath,amssymb,amsthm,bbm,natbib,color,hyperref}
\usepackage[margin=1in]{geometry}
\usepackage{graphicx}
\usepackage{tikz,verbatim}
\usetikzlibrary{decorations.pathreplacing}

\newtheorem{theorem}{Theorem}
\newtheorem{lemma}[theorem]{Lemma}
\newtheorem{proposition}[theorem]{Proposition}

\newtheorem{example}[theorem]{Example}
\newtheorem{fact}[theorem]{Fact}

\newcommand{\R}{\mathbb{R}}

\newcommand{\EE}[1]{\mathbb{E}\left[{#1}\right]}
\newcommand{\EEst}[2]{\mathbb{E}\left[{#1}\  \middle| \ {#2}\right]}
\newcommand{\Ep}[2]{\mathbb{E}_{{#1}}\left[{#2}\right]}

\newcommand{\PP}[1]{\mathbb{P}\left\{{#1}\right\}}
\newcommand{\PPst}[2]{\mathbb{P}\left\{{#1}\  \middle| \ {#2}\right\}}
\newcommand{\Ppst}[3]{\mathbb{P}_{{#1}}\left\{{#2}\  \middle| \ {#3}\right\}}
\newcommand{\Pp}[2]{\mathbb{P}_{{#1}}\left\{{#2}\right\}}
\newcommand{\eqd}{\stackrel{\textnormal{d}}{=}}
\newcommand{\One}[1]{{\mathbbm{1}}\left\{{#1}\right\}}
\newcommand{\one}[1]{{\mathbbm{1}}_{{#1}}}
\newcommand{\iidsim}{\stackrel{\textnormal{iid}}{\sim}}

  \newcommand\independent{\protect\mathpalette{\protect\independenT}{\perp}}
\def\independenT#1#2{\mathrel{\rlap{$#1#2$}\mkern2mu{#1#2}}}

\newcommand{\cX}{\mathcal{X}}
\newcommand{\cY}{\mathcal{Y}}
\newcommand{\cS}{\mathcal{S}}
\newcommand{\dtv}{\mathrm{d}_{\mathrm{TV}}}
\newcommand{\stleq}{\preceq_{\mathrm{st}}}
\newcommand{\stgeq}{\succeq_{\mathrm{st}}}
\newcommand{\cvxleq}{\preceq_{\mathrm{cvx}}}
\newcommand{\icxleq}{\preceq_{\mathrm{icx}}}
\newcommand{\dcxleq}{\preceq_{\mathrm{dcx}}}
\newcommand{\Fcal}{\mathcal{F}}
\newcommand{\Fleq}{\preceq_{\Fcal}}

\title{
Monte Carlo testing:
non-asymptotic guarantees without joint exchangeability
}
\author{Rina Foygel Barber\thanks{University of Chicago, \texttt{rina@uchicago.edu}} \ and Aaditya Ramdas\thanks{Carnegie Mellon University, \texttt{aramdas@cmu.edu}}}
\begin{document}
\maketitle

\begin{abstract}
In hypothesis testing, Monte Carlo tests are usually justified either by exact null simulation or by
joint exchangeability of the observed data and its simulated copies. This leaves
a gap for common computational procedures, such as parallel MCMC sampling
initialized at the observed data, where each copy may be marginally null and
even pairwise exchangeable with the observation, but the full collection is not
jointly exchangeable. In such cases the usual empirical p-value can be invalid
when the chain has not mixed, while exactly exchangeable constructions such as
the Besag--Clifford hub-and-spoke sampler may suffer from high conditional Monte
Carlo variability.
We give finite-sample guarantees for this intermediate regime. If, under the null, the observed
data \(X\) and a copy \(X'\sim P(\cdot\mid X)\) are conditionally
i.i.d.\ given a latent variable, then for any prespecified statistic and any finite
number $m$ of conditionally independent Monte Carlo copies, the resulting
empirical p-value obeys
\(
    \mathbb P\{p_m\le \alpha\}\le 2\alpha .
\)
This guarantee requires no mixing conditions and holds for any number of copies $m$, and it explains finite-sample oscillatory behavior in inference via MCMC sampling.  
In addition, we further show that the guarantee provides insights into inference problems arising in other settings, including inference on Bayesian models (recovering a classical result showing
validity up to a factor of $2$ for posterior predictive p-values), and inference via balanced permutation tests.
\end{abstract}

\section{Introduction}
Let $X\in\cX$ be observed data, and let $T:\cX\to\R$ be a prespecified test statistic. Suppose we wish to test a null hypothesis $H_0$, with the convention that large values of $T(X)$ indicate evidence against $H_0$. 

Of course, if the null distribution of $X$ is known, this immediately allows us to compute a p-value for testing the null, by computing the null distribution of $T(X)$. If instead we are only able to sample from the null, if $X_1,\dots,X_m$ are i.i.d.\ draws from the null distribution of $X$ then
\begin{equation}\label{eqn:pvalue_m_copies}p_m = \frac{1 + \sum_{i=1}^m\One{T(X_i)\geq T(X)}}{m+1}\end{equation}
provides a valid p-value for testing the null (often referred to as an `empirical p-value' or a `Monte Carlo p-value').

In many settings, however, computing (or sampling from) the null distribution of $X$ is not feasible: this distribution might not be known exactly, or the problem of sampling from the distribution may be computationally intractable. In these types of settings, alternative strategies may be used, as we describe next.

\subsection{Using MCMC sampling for inference}\label{sec:intro_MCMC}
Let $\pi(\cdot \mid x)$ be a Markov kernel, such that the null distribution $Q_0$ is a stationary distribution of the resulting Markov chain. We will write $\pi^s(\cdot\mid x)$ as the kernel for taking $s$ steps in the chain, for any $s\geq 1$.

If the Markov chain has good mixing properties, we may use MCMC samples as a proxy for i.i.d.\ samples: that is, we draw copies
\begin{equation}\label{eqn:MCMC_naive}X_1,\dots,X_m\iidsim \pi^s(\cdot \mid X)\end{equation}
by running the Markov chain for $s$ steps (for some sufficiently large $s$), initialized at $X$.

The (approximate) validity of this approach relies on the assumption that the Markov chain mixes well within $s$ steps. If this is not the case, there may be substantial dependence between $X$ and each copy $X_i$, and
this dependence may lead to an invalid p-value $p_m$, with an inflated Type-I error rate, $\Pp{H_0}{p_m\leq \alpha}>\alpha$. This can occur even if the Markov chain is reversible (i.e., even if each $X_i$ is exchangeable with $X$).

\begin{figure}[t]\centering
\includegraphics[width=0.75\textwidth]{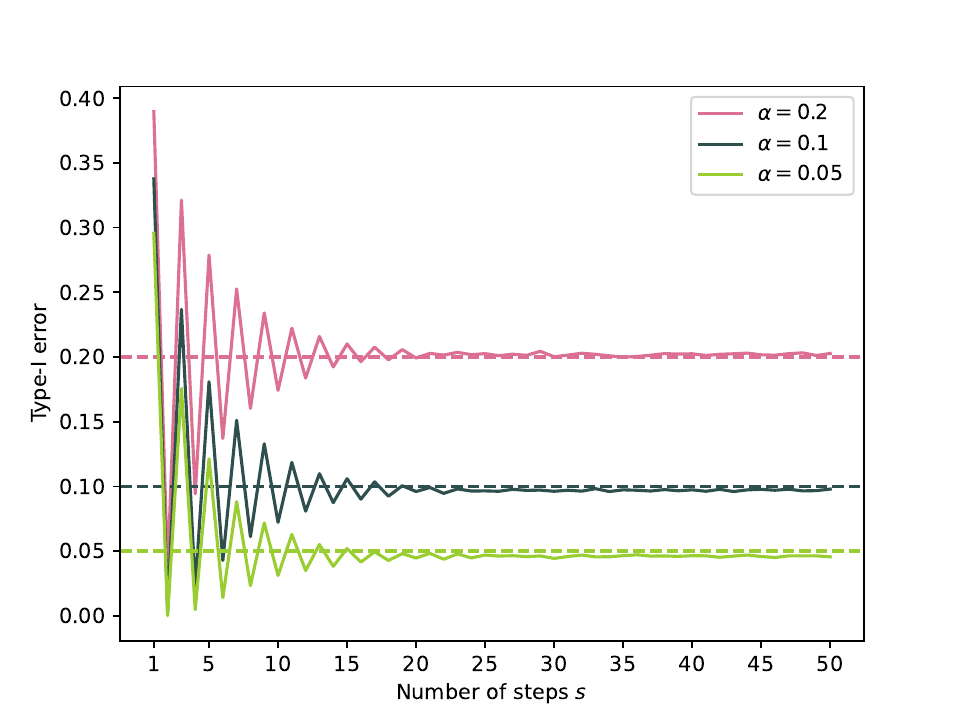}
    \caption{An example of Type-I error under the naive sampling scheme~\eqref{eqn:MCMC_naive}. The figure displays the Type-I error level $\PP{p_m\leq \alpha}$ for $m=1000$ (estimated over 10000 independent trials), as compared to the nominal level $\alpha$ (indicated by a dashed line).
    See Section~\ref{sec:intro_MCMC} for details. Instead of the traditional focus on asymptotic validity, our focus is on the nonasymptotic oscillations, and the surprising difference between even and odd values of $s$.
    }\label{fig:MCMC_naive_failure}
\end{figure}

See Figure~\ref{fig:MCMC_naive_failure} for an example, where we sample $X\sim \mathcal{N}(0,1)$, and the Markov kernel is $\pi(\cdot\mid x) = \mathcal{N}(\rho x, 1-\rho^2)$ for $\rho = -0.8$, with test statistic $T(X)=X$.\footnote{Code to reproduce empirical results is available at \url{https://colab.research.google.com/drive/1ze2I0keKNxwpe3E5mE2PLYTH-VJmQe04?usp=sharing}.} We can see that for moderate values of $s$, the quantity $p_m$ computed via the naive sampling strategy~\eqref{eqn:MCMC_naive} can massively fail to control Type-I error. As $s\to\infty$, on the other hand, $p_m$ behaves like a valid p-value, since the copies are now essentially i.i.d.\ draws from the distribution of $X$ (i.e., $s$ is sufficiently large for the Markov chain to exhibit mixing). Interestingly, we also observe that there is oscillation in the plots, and Type-I error does not decrease monotonically with $s$. Perhaps surprisingly, we will see later on that the worst-case Type-I error of this approach depends on whether $s$ is even or odd: an odd $s$ can potentially lead to arbitrarily large Type-I errors if the Markov chain is poorly mixing, while for even $s$ we will obtain a ``factor-of-$2$'' guarantee for any reversible Markov chain.

\subsubsection{Hub-and-spoke sampling}\label{sec:intro_hubspoke}
To overcome the potential loss of Type-I error control we have seen above, \citet{besag1989generalized} proposed a solution that modifies the way in which the copies are sampled from the Markov chain, and restore validity of $p_m$. Given a Markov kernel $\pi$ for which the null distribution $Q_0$ is stationary, their work proposes the following strategy (sometimes called `hub-and-spoke' sampling): 
\begin{equation}\label{eqn:MCMC_hub_and_spoke}\begin{cases}
    \textnormal{Sample a latent `hub' $X_*\sim \pi^{-s}(\cdot \mid X)$;}\\
\textnormal{Then sample the copies as `spokes', $X_1,\dots,X_m\iidsim \pi^s(\cdot\mid X_*)$.}
\end{cases}\end{equation}
Here $\pi^{-1}(\cdot \mid x)$ denotes the Markov kernel for the reverse chain, and $\pi^{-s}$ denotes sampling by taking $s$ steps in the reverse chain.
Figure~\ref{fig:besag_clifford} illustrates the hub-and-spoke sampling approach~\eqref{eqn:MCMC_hub_and_spoke} in comparison to the naive approach~\eqref{eqn:MCMC_naive}.\footnote{An alternative approach for generating copies is to sample serially, rather than in parallel: draw $X_1\sim \pi^s(\cdot\mid X)$, then $X_2\sim \pi^s(\cdot \mid X_1)$, and so on. \citet{besag1989generalized} also offer an exchangeable version of the serial sampling strategy, but we do not study these methods here.}

\begin{figure}[t]
\centering
\begin{tikzpicture}
\begin{scope}[xshift=0cm,yshift=0cm]

\node (X) at (-14,0) {$X$};
\node (Xt1) at (-11,2) {$X_1$};
\node (Xt2) at (-10.25,0.7) {$X_2$};
\node (Xt3) at (-10.25,-0.7) {$\dots$};
\node (Xt4) at (-11,-2) {$X_m$};
\draw[->] (X.east) -- (Xt1.west) node[midway,above] () {$\pi^s$};
\draw[->] (X.east) -- (Xt2.west) node[midway,above] () {$\pi^s$};
\draw[->] (X.east) -- (Xt3.west) node[midway,above] () {$\pi^s$};
\draw[->] (X.east) -- (Xt4.west) node[midway,above] () {$\pi^s$};

\node (X_) at (-7,0) {$X$};
\node (Xhub_) at (-4,0) {$X^*$};
\node (Xt1_) at (-5.5,2) {$X_1$};
\node (Xt2_) at (-1.5,1.5) {$X_2$};
\node (Xt3_) at (-1.5,-1.5) {$\dots$};
\node (Xt4_) at (-5.5,-2) {$X_m$};
\draw[->] (X_) -- (Xhub_) node[midway,above] () {$\pi^{-s}$};
\draw[->] (Xhub_) -- (Xt1_) node[midway,above] () {$\pi^s$};
\draw[->] (Xhub_) -- (Xt2_) node[midway,above] () {$\pi^s$};
\draw[->] (Xhub_) -- (Xt3_) node[midway,above] () {$\pi^s$};
\draw[->] (Xhub_) -- (Xt4_) node[midway,above] () {$\pi^s$};

\end{scope}
\end{tikzpicture}
    \caption{An illustration of the naive sampling strategy~\eqref{eqn:MCMC_naive} (left) and \citet{besag1989generalized}'s hub-and-spoke sampling strategy~\eqref{eqn:MCMC_hub_and_spoke} (right).}
    \label{fig:besag_clifford}
\end{figure}

The hub-and-spoke sampling scheme satisfies the following property:
\begin{equation}\label{eqn:joint_exch}\textnormal{If $X\sim Q_0$, and $Q_0$ is stationary under $\pi$, then $(X,X_1,\dots,X_m)$ is exchangeable.}\end{equation}
This property directly implies validity of the p-value $p_m$, regardless of the mixing properties of the Markov chain: $p_m$ is a valid p-value for testing $H_0$, even under arbitrarily strong dependence. However, poor mixing can lead to a different issue: that of excessive randomness, where $\textnormal{Var}(p_m\mid X)$ is nonnegligible even for arbitrarily large $m$. This is because $p_m$ may depend strongly on the randomly drawn hub $X_*$.
See Figure~\ref{fig:besag_clifford_highvar} for an illustration (in the same setting as Figure~\ref{fig:MCMC_naive_failure}, with $s=5$).
\begin{figure}[t]
\centering
\includegraphics[width=0.8\textwidth]{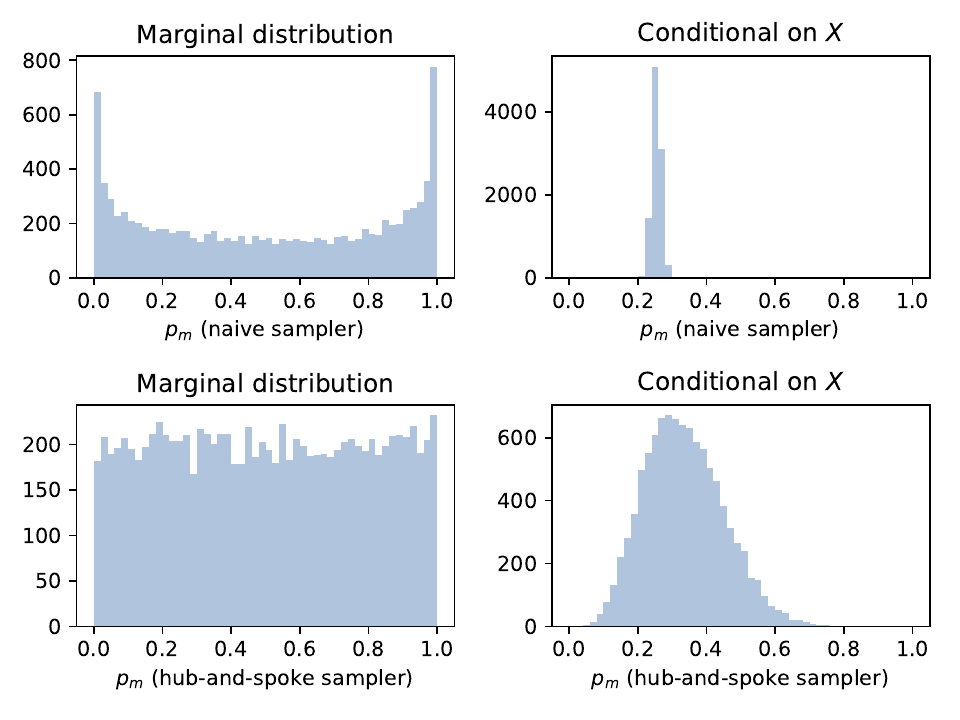}
    \caption{The left plots show the marginal distribution of $p_m$ ($m=1000,s=5$), and the right plots show conditional distribution of $p_m$ given a single draw of the data $X$, for naive~\eqref{eqn:MCMC_naive} or hub-and-spoke~\eqref{eqn:MCMC_hub_and_spoke} sampling. We see that hub-and-spoke sampling offers a valid p-value (i.e., the bottom-left histogram is uniform), but this comes at the cost of substantial randomness (i.e., the bottom-right histogram shows high variability). See Section~\ref{sec:intro_hubspoke} for details.}
    \label{fig:besag_clifford_highvar}
\end{figure}

In contrast, the sampling strategy given in~\eqref{eqn:MCMC_naive} cannot have this issue: we must have $\textnormal{Var}(p_m\mid X)\lesssim 1/m$, since $p_m$ averages over $m$ i.i.d.\ draws (conditional on $X$). However, as we have already seen, this can come at a cost: loss of Type-I error control.

\subsection{Our contributions}
In this work, we will establish that the MCMC approach defined in~\eqref{eqn:MCMC_naive}, which in general does not yield copies satisfying joint exchangeability~\eqref{eqn:joint_exch}, can nonetheless offer a (weaker) Type-I error guarantee, in certain settings. In particular, this means that we can avoid the issue of excessive randomness that can arise in \citet{besag1989generalized}'s approach, while maintaining a Type-I error guarantee that does not rely on any mixing conditions.

\section{Theoretical guarantees}
Given the observed data $X$, we will study the p-value
\begin{equation}\label{eqn:sample_copies_P}p_m =\frac{1 + \sum_{i=1}^m\One{T(X_i)\geq T(X)}}{m+1},  \textnormal{ where }X_1,\dots,X_m\mid X \iidsim P(\cdot\mid X),\end{equation}
for some choice of probability kernel $P$, and some prespecified test statistic $T:\cX\to\R$. In order for this to be a reasonable approach for testing the null hypothesis, we require that $P$ is \emph{compatible with the null}, in the following sense:
\begin{equation}\label{eqn:P_compatible_with_null}
    \textnormal{If $X\sim Q_0$ and $X'\mid X\sim P(\cdot\mid X)$, then marginally $X'\sim Q_0$.}
\end{equation}
For example, in the setting of naive MCMC sampling in~\eqref{eqn:MCMC_naive}, we choose $P(\cdot\mid X) = \pi^s(\cdot\mid X)$, where $\pi$ is the Markov kernel and $s$ is the number of steps. In this case, the condition~\eqref{eqn:P_compatible_with_null} is satisfied as long as the null distribution $Q_0$ is a stationary distribution for the Markov chain.

\subsection{A guarantee under the forward--backward condition}\label{sec:thm1}
Our first main result considers probability kernels $P$ that satisfy an additional condition:\footnote{Throughout the paper, we will implicitly assume that $\cX$ is a standard Borel space, to provide standard regularity conditions for working with conditional distributions and conditional expectations \citep[Theorem 10.2.2]{dudley2018real}.}
\begin{equation}\label{eqn:fb_condition}\begin{split}\textnormal{If $X\sim Q_0$ and $X'\mid X\sim P(\cdot\mid X)$, then there exists a random }\\
\textnormal{variable $Y$ such that $X,X'$ are conditionally i.i.d.\ given $Y$.}\end{split}
\end{equation}
Note that this condition is defined relative to a null distribution $Q_0$. In particular, it implies compatibility with the null~\eqref{eqn:P_compatible_with_null}, and moreover implies that $(X,X_i)$ will be an exchangeable pair for each copy $i=1,\dots,m$ (but does not ensure joint exchangeability as in~\eqref{eqn:joint_exch}).

We will refer to any $P$ satisfying the condition~\eqref{eqn:fb_condition} as a \emph{forward--backward} probability kernel, for the following reason:  if the condition~\eqref{eqn:fb_condition} is satisfied, then writing $\tilde{P}$ to denote the joint distribution of $(X,Y)$, sampling $X'\mid X\sim P(\cdot \mid X)$ is equivalent to first drawing $Y$ from the conditional distribution $\tilde{P}_{Y\mid X}(\cdot \mid X)$, and then drawing $X'$ from the conditional distribution $\tilde{P}_{X\mid Y}(\cdot \mid Y)$. In other words, if we consider a Markov chain on $\cX \cup \cY$, with Markov kernel $\tilde{P}_{Y\mid X}$, then sampling $X'\mid X\sim P(\cdot \mid X)$ is equivalent to taking one forward step on this chain (to draw $Y$ given $X$) followed by one backward step on this chain (to draw $X'$ given $Y$).

We are now ready to present our first bound on Type-I error.
\begin{theorem}\label{thm1}
    Let $P$ be a forward--backward probability kernel relative to the null $Q_0$. Then, if $X\sim Q_0$, the p-value $p_m$ defined in~\eqref{eqn:sample_copies_P} satisfies
    \[\PP{p_m\leq \alpha}\leq 2\alpha\textnormal{ for all $\alpha\in[0,1]$.}\]
\end{theorem}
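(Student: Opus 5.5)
The plan is to reduce Theorem~\ref{thm1} to a statement about conditional expectations of super-uniform variables, using the latent variable $Y$ from~\eqref{eqn:fb_condition}. Write $H_Y(u)=\PPst{T(X)\geq u}{Y}$. Since $X\sim Q_0$ given $Y$ has survival function $H_Y$ for $T(X)$, the variable $U:=H_Y(T(X))$ is super-uniform conditionally on $Y$: $\PPst{U\leq t}{Y}\leq t$ for all $t$. Hence $U$ is also super-uniform marginally. Define
\[G(x)=\Pp{X'\sim P(\cdot\mid x)}{T(X')\geq T(x)}.\]
By~\eqref{eqn:fb_condition}, drawing $X'$ given $X=x$ amounts to drawing $Y\mid X=x$ and then $X'\mid Y$. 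Therefore $G(X)=\EEst{U}{X}$. Moreover, conditionally on $X$, the count $N_m=\sum_{i=1}^m\One{T(X_i)\geq T(X)}$ is $\textnormal{Binomial}(m,G(X))$.

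\textbf{Step 1 (key inequality).} I will show that for any event $A$ determined by $X$ and by randomness that is independent of $Y$ given $X$ (in particular, by $X$ and the copies $X_1,\dots,X_m$),
\[\EE{G(X)\one{A}}=\EE{U\one{A}}\geq \PP{A}^2/2 .\]
The equality follows by conditioning on $X$. The inequality holds because a super-uniform $U$ restricted to an event of probability $a$ has integral at least $\int_0^a u\,du=a^2/2$. Taking $A=\{G(X)\leq t\}$ already gives $\PP{A}^2/2\leq t\,\PP{A}$, so $\PP{G(X)\leq t}\leq 2t$. This is the $m=\infty$ version of the theorem.

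\textbf{Step 2 (finite $m$).} Take $A=\{p_m\leq\alpha\}=\{N_m\leq k\}$, where $k=\lfloor\alpha(m+1)\rfloor-1$, so that $(k+1)/(m+1)\leq\alpha$. For the upper bound on $\EE{G(X)\one{A}}$, introduce an auxiliary copy $X_{m+1}\sim P(\cdot\mid X)$. Then $\EE{G(X)\one{A}}=\PP{N_m\leq k,\ T(X_{m+1})\geq T(X)}$. Since the copies are exchangeable given $X$, this equals
\[\EE{\tfrac{N_{m+1}}{m+1}\One{N_{m+1}\leq k+1}},\]
which is the binomial identity $g\,\PP{\textnormal{Bin}(m,g)\leq k}=\sum_{j\leq k+1}\tfrac{j}{m+1}\PP{\textnormal{Bin}(m+1,g)=j}$. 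The aim is to bound this by $\alpha\,\PP{A}$. Combined with Step 1, that would give $\PP{A}^2/2\leq\alpha\PP{A}$, and hence the claim.

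\textbf{Main obstacle.} The bound by $\alpha\,\PP{A}$ in Step 2 is where the difficulty lies. The natural estimate is $\tfrac{k+1}{m+1}\PP{N_{m+1}\leq k+1}$, but $\PP{N_{m+1}\leq k+1}\geq\PP{N_m\leq k}$, which is the wrong direction.

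My first attempt will apply Step 1 to a different event rather than $A$ itself. One option is the randomized event obtained by deleting one of $m+1$ copies uniformly at random, so that the upper bound matches the event exactly. Another is to use both events $\{N_{m+1}\leq k+1\}$ and $\{N_m\leq k\}$ and telescope between them. The second alternative is to sharpen Step 1 so that it works directly with a weight $w(X)$ in place of $\one{A}$. The lower bound generalizes: if $w\in[0,1]$ is a function of $X$, then $\EE{Uw(X)}\geq \EE{w}^2/2$ by the same rearrangement argument. Then, with $w(x)=\PP{\textnormal{Bin}(m,G(x))\leq k}$, one would check the purely one-dimensional inequality $g\,w\leq \tfrac{k+1}{m+1}\,w$ in an averaged sense, using the shape of the binomial cdf.

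I expect this last verification, namely showing that the finite-$m$ discreteness does not cost more than the factor $2$, to be the hard step.
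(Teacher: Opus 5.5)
Your Step 1 is correct and matches the paper's infinite case. Your $U$ is the paper's $\tilde p_\infty$, and the bound $\EE{U\one{A}}\geq\PP{A}^2/2$ is a direct rearrangement proof that $G(X)=\EEst{U}{X}$ is a p$^*$-variable.

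The gap is Step 2, and it is not merely a hard verification: the inequality you aim for, $\EE{G(X)\one{A}}\leq\alpha\,\PP{A}$, is false in general. Your other variants do not escape this:
\begin{itemize}
\item The weighted variant is literally the same inequality, since $\EE{U\one{A}}=\EE{G(X)\PPst{A}{X}}$.
\item The randomized-deletion event has the same conditional probability given $X$ as $A$, so it changes nothing.
\end{itemize}
For a counterexample, let $Y\sim\textnormal{Unif}[0,L]$, let $X,X'$ be i.i.d.\ $\textnormal{Unif}[Y,Y+1]$ given $Y$, and let $T(x)=x$. For $x\in[1,L]$ the posterior of $Y$ given $X=x$ is $\textnormal{Unif}[x-1,x]$, so $G(X)=1/2$ with probability $1-1/L$. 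With $m=9$ and $\alpha=0.2$ (so $k=1$) this gives $\EE{G(X)\one{A}}\approx\tfrac12\PP{A}>\alpha\,\PP{A}$. Even the exact value of $\EE{G(X)\one{A}}$, fed into Step 1, yields only $\PP{A}\lesssim1$. So Step 1 applied to the event $A$ is intrinsically too weak, and no upper bound can rescue it. The root cause is twofold: $\{N_m\leq k\}$ carries mass where $G(X)>\alpha$, and $g\mapsto\PP{\textnormal{Binomial}(m,g)\leq k}$ is not convex for $k\geq1$. Hence the p$^*$ property of $G(X)$ cannot be applied to the indicator directly.

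The missing idea is to dominate the indicator by a nonincreasing convex function before averaging over the binomial. The argument runs as follows.
\begin{enumerate}
\item Apply Step 1 to the event $\{G(X)\leq c\}$, of probability $a$. This gives $\EE{(c-G(X))_+}\leq ca-a^2/2\leq c^2/2$ for every $c\in[0,1]$, i.e.\ $G(X)\dcxleq\textnormal{Unif}[0,1]$.
\item For nonincreasing convex $f$, the map $g\mapsto\EE{f(\textnormal{Binomial}(m,g))}$ is again nonincreasing and convex; this is the classical fact behind Proposition~\ref{prop:ordering}.
\item Hence $\EE{f(N_m)}\leq\EE{f(B)}$, where $W\sim\textnormal{Unif}[0,1]$ and $B\mid W\sim\textnormal{Binomial}(m,W)$. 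By Lemma~\ref{lem:binomial_uniform}, $B$ is uniform on $\{0,\dots,m\}$.
\item Take $f(n)=\big(2\alpha-\tfrac{n+1}{m+1}\big)_+/\alpha$, which dominates $\One{\tfrac{n+1}{m+1}\leq\alpha}$. Then $\PP{p_m\leq\alpha}\leq\EE{\big(2\alpha-\tfrac{B+1}{m+1}\big)_+}/\alpha\leq\EE{(2\alpha-W)_+}/\alpha\leq2\alpha$. The middle inequality holds because $\tfrac{B+1}{m+1}\stgeq\textnormal{Unif}[0,1]$.
\end{enumerate}
This is exactly the paper's proof that $p_m\dcxleq\textnormal{Unif}[0,1]$, combined with the factor-of-$2$ bound for p$^*$-variables.
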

\noindent This result holds for any finite $m$, and we do not require $m$ to be large---but of course, a larger $m$ will often lead to better performance, in terms of allowing for smaller values of $p_m$ (since $p_m$ can never be smaller than $\frac{1}{m+1}$), and also reducing randomness, i.e., reducing $\textnormal{Var}(p_m\mid X)$. It is also of interest, therefore, to consider the limiting case, as $m\to\infty$. Define
\begin{equation}\label{eqn:sample_copies_P_infty}p_\infty = \PPst{T(X')\geq T(X)}{X},\textnormal{ where }X'\mid X \sim P(\cdot\mid X).\end{equation}
Conditional on $X$, the finite-$m$ quantity $p_m$ is simply an empirical estimate of $p_\infty$, and we must have $p_m\stackrel{\textnormal{a.s.}}{\to} p_\infty$ as $m\to\infty$; we can think of $p_\infty$ as a completely derandomized version of the p-value $p_m$. 
Consequently, $p_\infty$ inherits the same Type-I error guarantees as $p_m$: that is, Theorem~\ref{thm1} holds with $m=\infty$ as well.

In fact, we will also see below that the case $m=\infty$ can be established with existing tools, because $p_\infty$ can be represented as an `average of valid p-values' \citep{ruschendorf1982random,meng1994posterior,vovk2020combining,wang2024testing}.
In contrast, the result of Theorem~\ref{thm1} for finite $m$ requires a new type of argument.

\paragraph{An explanation of Figure~\ref{fig:MCMC_naive_failure}.}
    To better understand the implications of this theorem, we return to the oscillatory behavior observed for low values of $s$ in Figure~\ref{fig:MCMC_naive_failure}. We will now see why Theorem~\ref{thm1} explains this oscillation.
    
    Consider a Markov chain with transition probabilities given by the Markov kernel $\pi$, and with stationary distribution $Q_0$. Suppose the Markov chain is reversible. Fix any $s\geq 1$, and let $P(\cdot\mid X) = \pi^s(\cdot\mid X)$.     
    Then, if the number of steps $s$ is even, $P$ is a forward--backward probability kernel: if we write $s=2r$ and let $Y\sim \pi^r(\cdot\mid X)$ denote a random variable obtained by taking $r$ steps along the Markov chain, then the condition~\eqref{eqn:fb_condition} is clearly satisfied.

 This example explains the pattern observed in Figure~\ref{fig:MCMC_naive_failure}: for even values of $s$, the Type-I error is guaranteed to be bounded (in fact, in the example shown in the figure we have Type-I error $\leq \alpha$, although the theorem only guarantees $\leq 2\alpha$). In contrast, for odd values of $s$ it is no longer the case that $P$ is a forward--backward probability kernel, and the Type-I error may be quite high---even though it still holds that $P$ is compatible with the null as in~\eqref{eqn:P_compatible_with_null}, and even though  $(X,X_i)$ is an exchangeable pair for each copy $i=1,\dots,m$.

From this example, we draw the following conclusion. If we use a reversible MCMC sampler for generating copies, then it is safest to use an even step size $s$, as a safeguard against the possibility of slow mixing: while we hope that the Markov chain mixes well within $s$ steps, so that the copies are nearly i.i.d., even if this is not the case we would lose at most a factor of $2$ in the Type-I error control.

\subsection{An alternative bound under a total variation condition}
In the discussion above, we applied the results of Theorem~\ref{thm1} to the setting of MCMC sampling, where copies $X_i$ are generated by taking $s$ steps (for an even $s$) along a reversible Markov chain. So far, our results have not placed any assumptions on the mixing properties of this chain: the Type-I bound of Theorem~\ref{thm1} applies even for small $s$, and even if the Markov chain is very slowly mixing (so that the copies $X_i$ may be highly correlated with $X$). However, empirically in Figure~\ref{fig:MCMC_naive_failure} we observe that as $s$ increases, the p-value $p_m$ becomes more reliable, i.e., its Type-I error approaches the nominal level $\alpha$---and moreover, this limiting behavior holds for both odd and even $s$. This is because, as $s\to\infty$, the dependence between $X$ and its copies is vanishing, and so the copies are essentially i.i.d.\ draws from the same distribution $Q_0$ as $X$. Our next result examines why this occurs.

Returning to the general setting where copies are generated from any probability kernel $P$, as in~\eqref{eqn:sample_copies_P}, we now aim to show that the resulting p-value is approximately valid if there is limited dependence between $X$ and its copies $X_i$, that is, if the $X_i$'s are nearly i.i.d.\ copies of $X$:
\begin{equation}\label{eqn:P_tv_condition_simple}
    \textnormal{If $X\sim Q_0$ and $X'\mid X\sim P(\cdot\mid X)$, then $\dtv\big((X,X'), Q_0\times Q_0\big)\leq \epsilon$},
\end{equation}
where $\dtv$ denotes the total variation distance.
This condition can be generalized to the following: there exists some function $f$ such that
\begin{equation}\label{eqn:P_tv_condition}\begin{split}
    \textnormal{If $X\sim Q_0$ and $X'\mid X\sim P(\cdot\mid X)$ and $X''\mid X \sim Q_0(\cdot\mid f(X))$,}\\\textnormal{then $\dtv\big((X,X'), (X,X'')\big)\leq \epsilon$.}\end{split}
\end{equation}
Here $Q_0(\cdot\mid f(X))$ denotes the conditional distribution of $X\mid f(X)$ induced by $X\sim Q_0$.
(The simpler condition~\eqref{eqn:P_tv_condition_simple} can be obtained as a special case by simply taking $f(x)\equiv 0$, i.e., $f(X)$ contains no information.)

We may also consider stronger total variation bounds:
\begin{equation}\label{eqn:P_tv_condition_simple_strong}
    \textnormal{If $X\sim Q_0$, then $\dtv\big(P(\cdot\mid X),Q_0\big)\leq \epsilon$ almost surely},
\end{equation}
or more generally, for some function $f$,
\begin{equation}\label{eqn:P_tv_condition_strong}
    \textnormal{If $X\sim Q_0$ then $\dtv\big(P(\cdot\mid X),Q_0(\cdot\mid f(X))\big)\leq \epsilon$ almost surely.}
\end{equation}

\begin{theorem}\label{thm2}
    Let $P$ be a probability kernel that 
    satisfies the condition~\eqref{eqn:P_tv_condition}.
    Let the p-value $p_m$ be defined in~\eqref{eqn:sample_copies_P} for finite $m\geq1$, or, as defined in~\eqref{eqn:sample_copies_P_infty} for $m=\infty$. Then, if $X\sim Q_0$, $p_m$ satisfies
    \[\PP{p_m\leq \alpha}\leq \alpha+\sqrt{2\epsilon}\textnormal{ for all $\alpha\in[0,1]$.}\]
    If instead the stronger condition~\eqref{eqn:P_tv_condition_strong} is satisfied,
    then
    \[\PP{p_m\leq \alpha}\leq \alpha+\epsilon\textnormal{ for all $\alpha\in[0,1]$.}\]
\end{theorem}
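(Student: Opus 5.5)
The plan is to compare $p_m$ with an ideal p-value built from copies $X''\mid X\sim Q_0(\cdot\mid f(X))$. Everything is routed through the conditional success probability, so that the factor $m$ never enters the total variation bound. Write
\[q(X)=\PPst{T(X')\geq T(X)}{X}\ (=p_\infty),\qquad q''(X)=\PPst{T(X'')\geq T(X)}{X},\]
where $X'\sim P(\cdot\mid X)$ and $X''\sim Q_0(\cdot\mid f(X))$. Conditional on $X$, the count $\sum_i\One{T(X_i)\geq T(X)}$ is $\mathrm{Bin}(m,q(X))$. Hence $\PP{p_m\leq\alpha}=\EE{h_m(q(X))}$, where
\[h_m(u)=\PP{\tfrac{1+\mathrm{Bin}(m,u)}{m+1}\leq \alpha}\]
is nonincreasing in $u$. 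For $m=\infty$ we take $h_\infty(u)=\One{u\leq\alpha}$.

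\emph{Step 1 (the ideal p-value is uniformly controlled).} Conditional on $f(X)$, the variables $X$ and $X''$ are i.i.d.\ from $Q_0(\cdot\mid f(X))$. So $q''(X)$ is the survival function of $T$ under that conditional law, evaluated at the independent draw $T(X)$. It is therefore conditionally super-uniform, and hence marginally $q''(X)\stgeq U$ with $U\sim\mathrm{Unif}[0,1]$. Next, $(1+\mathrm{Bin}(m,U))/(m+1)$ is exactly uniform on $\{\frac1{m+1},\dots,1\}$, because the rank of a uniform among $m$ i.i.d.\ uniforms is uniform. This gives $\EE{h_m(U)}\leq\alpha$ for every $m\le\infty$.

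\emph{Step 2 (strong condition).} Under~\eqref{eqn:P_tv_condition_strong}, the event $\{x:T(x)\geq T(X)\}$ gives $q(X)\geq (q''(X)-\epsilon)_+$ almost surely. By monotonicity of $h_m$ and the stochastic dominance from Step 1,
\[\EE{h_m(q(X))}\leq \EE{h_m((U-\epsilon)_+)}=\int_0^1 h_m((u-\epsilon)_+)\,du\leq \epsilon+\int_0^{1-\epsilon}h_m(v)\,dv\leq \epsilon+\alpha,\]
using $h_m\leq 1$ on the first $\epsilon$-stretch of the integral.

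\emph{Step 3 (weak condition).} Let $\delta(X)=\dtv\big(P(\cdot\mid X),Q_0(\cdot\mid f(X))\big)$. The total variation between the joint laws $(X,X')$ and $(X,X'')$, which share the same $X$-marginal, equals $\EE{\delta(X)}$. So~\eqref{eqn:P_tv_condition} says exactly $\EE{\delta(X)}\leq\epsilon$, and we have $q(X)\geq (q''(X)-\delta(X))_+$.

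The main obstacle is to convert this averaged control into the $\sqrt{2\epsilon}$ rate, since $\delta$ and $q''$ may be dependent. My first attempt is to truncate at a level $t$:
\[\EE{h_m(q)}\leq \PP{\delta>t}+\EE{h_m((q''-t)_+)}\leq \frac{\epsilon}{t}+t+\alpha,\]
which gives $\alpha+2\sqrt{\epsilon}$ after optimizing $t$. To sharpen the constant to $\sqrt{2\epsilon}$, I would instead bound $\EE{h_m((q''-\delta)_+)}-\EE{h_m(q'')}$ directly. In the worst case $h_m$ is close to the step function $\One{u\leq\alpha}$, and the adversary places the $\delta$-mass on points with $q''$ just above $\alpha$. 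Since $q''$ has density at most $1$ on that stretch, the excess becomes a rearrangement problem: maximize $\PP{q''\in(\alpha,\alpha+\delta]}$ subject to $\EE{\delta}\leq\epsilon$. Filling a triangle gives $s$ with $s^2/2=\epsilon$, that is $s=\sqrt{2\epsilon}$.

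I would formalize this using a uniform $U$ coupled so that $q''\geq U$. By a Hardy--Littlewood type rearrangement argument, the worst case is $\delta=(U-\alpha)_+$ restricted to $U\le\alpha+s$. For general nonincreasing $h_m$, I would write $h_m$ as a mixture of indicators $\One{u\leq a}$ and apply the same bound to each indicator.
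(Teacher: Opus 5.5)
Your proposal is correct and is essentially the paper's proof. You compare against the ideal kernel $Q_0(\cdot\mid f(X))$, obtain superuniformity of $q''$ from Fact~\ref{fact:conditional_iid_pval}, and use the pointwise bound $q\geq q''-\delta(X)$. Your triangle/rearrangement step is exactly Lemma~\ref{lem:dtv_unif} and yields $q\stgeq (U-\sqrt{2\epsilon})_+$, or $(U-\epsilon)_+$ under~\eqref{eqn:P_tv_condition_strong}. That step must be run against a coupled uniform $U\leq q''$, as your formalization does, and not against $q''$ itself. A superuniform $q''$ can have atoms, so the ``density at most $1$'' heuristic fails for it, and the excess over $q''$ need not be at most $\sqrt{2\epsilon}$.

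The only difference is packaging at finite $m$. You integrate the nonincreasing function $h_m$, viewed as a mixture of indicators, against this dominance. The paper instead invokes Proposition~\ref{prop:ordering} together with the data-processing bound $\dtv(B_1,B_2)\leq\Delta$.
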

\noindent Note that in this setting, there is no longer a multiplicative factor of $2$ in the Type-I error guarantee, unlike in Theorem~\ref{thm1} where the bound is $2\alpha$. Instead, the Type-I error may be arbitrarily close to $\alpha$, depending on the parameter $\epsilon$ that describes the probability kernel $P$ as in~\eqref{eqn:P_tv_condition} or~\eqref{eqn:P_tv_condition_strong}.

\paragraph{An explanation of Figure~\ref{fig:MCMC_naive_failure}, revisited.}

We now see how this relates to the setting where $P(\cdot\mid X) = \pi^s(\cdot\mid X)$ for a Markov chain.

    Consider a Markov chain with transition probabilities given by the Markov kernel $\pi$, and with stationary distribution $Q_0$. Fix any $s\geq 1$, and let $P(\cdot\mid X) = \pi^s(\cdot\mid X)$. Suppose the Markov chain satisfies the following mixing property:
    \[\dtv\big( \pi^s(\cdot\mid X),Q_0\big)\leq \epsilon\textnormal{ almost surely}.\]
    In the terminology of MCMC, this means that the \emph{mixing time} of the Markov chain is $\leq s$ (for tolerance level $\epsilon$). Then the probability kernel $P$ satisfies the condition~\eqref{eqn:P_tv_condition_strong} (in fact, the simpler condition~\eqref{eqn:P_tv_condition_simple_strong} is satisfied), and so Theorem~\ref{thm2} implies $\PP{p_m\leq\alpha}\leq\alpha+\epsilon$.

This application explains the phenomenon we observe in Figure~\ref{fig:MCMC_naive_failure} as $s\to\infty$: for sufficiently large $s$ (regardless of whether $s$ is even or odd), once the Markov chain has mixed reasonably well, we see that $p_m$ provides an approximately valid p-value.

\section{Applications}
We next develop several applications of Theorem~\ref{thm1} to a range of statistical inference problems. We first return to the motivating question of inference with MCMC samples, now exploring the more general setting of a non-reversible Markov chain. Afterwards, we consider additional examples: Bayesian posterior predictive inference, and constrained permutation testing.

\subsection{Reversible and non-reversible MCMC}\label{sec:examples_nonreversible}

For our first application, we return to the problem of using MCMC sampling to generate the copies $X_i$, as described in Section~\ref{sec:intro_MCMC}. Here we will consider three different ways to apply the results of Theorem~\ref{thm1} to this problem.

First, to provide a baseline, we review the result described in Section~\ref{sec:thm1}, for the case of a reversible Markov chain.
\begin{example}[Forward--backward probability kernel for a reversible Markov chain]\label{ex1}
    Fix any $s\geq 1$, and define $P(\cdot\mid X) = \pi^s(\cdot\mid X)$, where $\pi$ is the transition kernel for a reversible Markov chain with stationary distribution $Q_0$.
    Then, if the number of steps $s=2r$ is even, $P$ is a forward--backward probability kernel.
\end{example}

We emphasize that Example~\ref{ex1} above applies only for Markov chains that are reversible (i.e., $\pi = \pi^{-1}$). If the transition kernel $\pi$ corresponds to a non-reversible Markov chain, then under slow mixing, the Type-I error control properties of $p_m$ may be arbitrarily bad regardless of whether $s$ is odd or even---see Figure~\ref{fig:MCMC_naive_failure_nonreversible} for an example, where we sample $X\sim Q_0$ for $Q_0=\textnormal{Unif}[0,1]$ with test statistic $T(X)=X$, and use the transition kernel $\pi(\cdot\mid x) = (1/\tau) \cdot \delta_{x/\tau} + (1-1/\tau)\cdot\textnormal{Unif}[1/\tau,1]$, with $\tau = 1.01$.

\begin{figure}[t]
\centering
\includegraphics[width=0.8\textwidth]{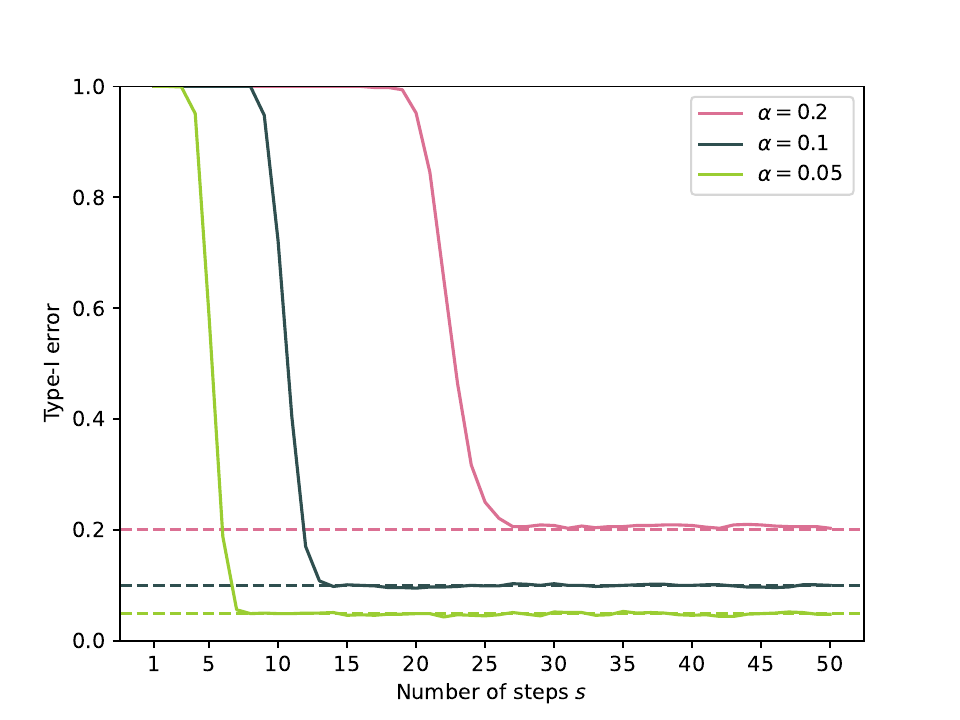}
    \caption{An example of Type-I error under the naive sampling scheme~\eqref{eqn:MCMC_naive}, when the Markov chain is not reversible. The figure displays the Type-I error level $\PP{p_m\leq \alpha}$ for $m=1000$ (estimated over 1000 independent trials), as compared to the nominal level $\alpha$ (indicated by a dashed line); we see severe loss of Type-I error control for low values of $s$, regardless of even or odd values (in contrast to Figure~\ref{fig:MCMC_naive_failure})
    See Section~\ref{sec:examples_nonreversible} for details.
    }\label{fig:MCMC_naive_failure_nonreversible}
\end{figure}

However, even in the case of a non-reversible Markov chain, we can modify the sampling strategy in order to be able to apply Theorem~\ref{thm1}. Here we present two such options  (where for each one, we again consider a Markov chain with transition probabilities $\pi$ for which $Q_0$ is stationary, but no longer assume it is reversible).

\begin{example}[Forward--backward probability kernel for a non-reversible Markov chain: version 1]\label{ex2}
     Fix any $r\geq 1$, and define the probability kernel $P$ as $P(\cdot\mid X) = [ \pi^r \circ \pi^{-r}](\cdot \mid X)$, where $\pi^r \circ \pi^{-r}$ denotes that we first take $r$ steps backward in the chain, then $r$ steps forward in the chain.
    Note that the conditional distribution of each individual copy, $X_i\mid X$, is the same as for \citet{besag1989generalized}'s hub-and-spoke sampler~\eqref{eqn:MCMC_hub_and_spoke} (with $r$ in place of $s$).
\end{example}

\begin{example}[Forward--backward probability kernel for a non-reversible Markov chain: version 2]\label{ex3}
    Fix any $r\geq 1$, and define the probability kernel $P$ as $P(\cdot\mid X) = [ \pi^{-1}\circ \pi]^r(\cdot \mid X)$. That is, we take one step forward and one step backward in the chain, and then repeat this process $r$ times.
\end{example}

 \noindent Note that, if the Markov chain is reversible, then Examples~\ref{ex2} and~\ref{ex3} are in fact exactly equivalent to Example~\ref{ex1}.
See Figure~\ref{fig:illustrate_MCMC_options} for an illustration comparing all three of these MCMC examples. The following proposition verifies the validity of each of the above examples:
\begin{proposition}\label{prop:MCMC_fb}
    In each MCMC example above (Examples~\ref{ex1},~\ref{ex2}, and~\ref{ex3}), the kernel $P$ is a forward--backward probability kernel.
\end{proposition}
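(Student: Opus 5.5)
In each example I will construct the latent variable $Y$ explicitly and check that $X$ and $X'$ are conditionally i.i.d.\ given $Y$. For this I will use one reformulation throughout. Suppose $X\sim Q_0$, $Y\mid X\sim K(\cdot\mid X)$ for some kernel $K$, and $X'\mid (X,Y)$ is drawn from the backward conditional $\tilde P_{X\mid Y}(\cdot\mid Y)$ of the joint law $\tilde P$ of $(X,Y)$. Then $X\independent X'\mid Y$ and $X'\mid Y$ has the same law as $X\mid Y$, so condition~\eqref{eqn:fb_condition} holds. The remaining task in each example is therefore to show that the second half of the sampling procedure, given the intermediate point, is exactly the backward conditional of the first half.

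I will use two facts about the chain. Since $Q_0$ is stationary, if $Z_0\sim Q_0$ and $Z_{t+1}\mid Z_t\sim\pi(\cdot\mid Z_t)$, the chain is stationary. The reverse kernel $\pi^{-1}$ is defined exactly so that $(Z_0,\dots,Z_r)$ has the same law whether it is generated forward from $Z_0\sim Q_0$ using $\pi$, or backward from $Z_r\sim Q_0$ using $\pi^{-1}$. Consequently the conditional law of $Z_0$ given $Z_r$ is $\pi^{-r}(\cdot\mid Z_r)$, the conditional law of $Z_r$ given $Z_0$ is $\pi^{r}(\cdot\mid Z_0)$, and both marginals are $Q_0$. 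On a standard Borel space these are just disintegrations, so no regularity issues arise.

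\textbf{Example~\ref{ex1}.} Write $s=2r$ and set $Y\sim\pi^r(\cdot\mid X)$, so that $X'\mid Y\sim\pi^r(\cdot\mid Y)$ and by the Markov property $X'\mid X\sim\pi^{2r}(\cdot\mid X)$. By the fact above, the law of $X$ given $Y$ is $\pi^{-r}(\cdot\mid Y)$, which equals $\pi^r(\cdot\mid Y)$ by reversibility. This matches the law of $X'$ given $Y$, and $X\independent X'\mid Y$ by the Markov property.

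\textbf{Example~\ref{ex2}.} Set $Y\sim\pi^{-r}(\cdot\mid X)$, the hub, and $X'\mid Y\sim\pi^r(\cdot\mid Y)$. Running the reverse chain from $X\sim Q_0$ gives a pair $(Y,X)$ with $Y\sim Q_0$, and $X\mid Y\sim\pi^{r}(\cdot\mid Y)$ by the reversal fact applied to the reverse chain, whose reverse is $\pi$. So $X$ and $X'$ are both $\pi^r(\cdot\mid Y)$ given $Y$, and they are conditionally independent by construction.

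\textbf{Example~\ref{ex3}.} Here I take $Y$ to be the entire intermediate trajectory, not a single point. Write the path as $X=W_0\to V_1\to W_1\to V_2\to\cdots\to V_r\to W_r=X'$, with forward steps $V_j\sim\pi(\cdot\mid W_{j-1})$ and backward steps $W_j\sim\pi^{-1}(\cdot\mid V_j)$. Each pair $(W_{j-1},W_j)$ is conditionally i.i.d.\ given $V_j$: the law of $W_{j-1}$ given $V_j$ is $\pi^{-1}(\cdot\mid V_j)$ by the reversal fact, which matches the law of $W_j$ given $V_j$. Inductively all $W_j\sim Q_0$, so the stationarity hypothesis holds at each stage.

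The crux of Example~\ref{ex3} is that conditional i.i.d.-ness does not compose trivially. I will handle it with a time-reversal symmetry argument. Let $Y=(V_1,W_1,\dots,W_{r-1},V_r)$. The joint law of the path is $Q_0(dW_0)\prod_j \pi(dV_j\mid W_{j-1})\pi^{-1}(dW_j\mid V_j)$. Using $Q_0(dw)\pi(dv\mid w)=Q_0^{\pi}(dv)\pi^{-1}(dw\mid v)$, where $Q_0^\pi=Q_0$, the path law is invariant under reversing the order $(W_0,V_1,\dots,V_r,W_r)\mapsto(W_r,V_r,\dots,V_1,W_0)$. Moreover, given $Y$, the density factorizes as $\pi^{-1}(dW_0\mid V_1)\,\pi^{-1}(dW_r\mid V_r)$ times terms depending only on $Y$. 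Hence $W_0\independent W_r\mid Y$, with $W_0\mid Y\sim\pi^{-1}(\cdot\mid V_1)$ and $W_r\mid Y\sim\pi^{-1}(\cdot\mid V_r)$. These two conditional laws are not identical, so $Y$ must be symmetrized. I will take $Y=\{\text{path}, \text{reversed path}\}$, the unordered pair. By the reversal invariance, the law of $(X,X')$ given this unordered object is symmetric, so $X$ and $X'$ are exchangeable given $Y$. Conditional independence survives because, given the unordered path, one is choosing the orientation, and within each orientation the endpoints factorize. I expect checking this last step carefully to be the main obstacle. It may be cleaner to redo the factorization directly with respect to the symmetric $Y$.
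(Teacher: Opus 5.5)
Your arguments for Examples~\ref{ex1} and~\ref{ex2} are correct and coincide with the paper's. The gap is in Example~\ref{ex3}, at exactly the step you flagged: taking $Y$ to be the unordered pair $\{\text{path},\text{reversed path}\}$ gives exchangeability but destroys conditional independence.

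\begin{itemize}
\item If the path includes its endpoints, then given $Y$ the unordered pair $\{X,X'\}$ is known. So $X'$ is determined by $(X,Y)$.
\item If the path is only the intermediate trajectory, with end states $V_1=a$ and $V_r=b$, then your own factorization gives the conditional law of $(X,X')$ given $Y$ as the orientation mixture
\[\tfrac12\Big(\pi^{-1}(\cdot\mid a)\otimes\pi^{-1}(\cdot\mid b)+\pi^{-1}(\cdot\mid b)\otimes\pi^{-1}(\cdot\mid a)\Big).\]
This is exchangeable, but it is not a product measure unless $\pi^{-1}(\cdot\mid a)=\pi^{-1}(\cdot\mid b)$. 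For instance, if these are $\delta_0$ and $\delta_1$, then given $Y$ the pair $(X,X')$ is either $(0,1)$ or $(1,0)$, so $X$ and $X'$ are perfectly dependent.
\end{itemize}

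Your claim that ``within each orientation the endpoints factorize'' does not rescue this, because a mixture of product measures is generally not a product measure.

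The missing idea is to condition on less: take $Y$ to be the single state at the midpoint (step $r$) of the $2r$-step alternating path. That is, $Y=W_{r/2}$ if $r$ is even and $Y=V_{(r+1)/2}$ if $r$ is odd; this is the paper's choice.

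\begin{itemize}
\item Conditional independence of $X=W_0$ and $X'=W_r$ given $Y$ is then just the Markov property of the path.
\item Equal conditional laws follow from your reversal fact. Reversing the first half of the path reverses the order of its steps and turns each $\pi$-step into a $\pi^{-1}$-step and vice versa. The alternating pattern (forward, backward, forward, \dots) is mapped to itself by this operation about the midpoint.
\item Hence $X\mid Y$ and $X'\mid Y$ both have law $[\pi^{-1}\circ\pi]^{r/2}(\cdot\mid Y)$ when $r$ is even, and $\big[[\pi^{-1}\circ\pi]^{(r-1)/2}\circ\pi^{-1}\big](\cdot\mid Y)$ when $r$ is odd.
\end{itemize}

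This fits your own framework: the second half of the procedure, started from the intermediate point, is exactly the backward conditional of the first half. So nothing else in your setup needs to change. Equivalently, $K=\pi^{-1}\circ\pi$ is $Q_0$-reversible, so for even $r$ Example~\ref{ex3} reduces directly to Example~\ref{ex1} applied to $K$.
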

\noindent Consequently, in each example, Theorem~\ref{thm1} ensures that constructing $p_m$ with copies sampled from $P$ will satisfy $\PP{p_m\leq\alpha}\leq 2\alpha$, for any finite $m$ or for $m=\infty$, under the null $X\sim Q_0$.
\begin{proof}[Proof of Proposition~\ref{prop:MCMC_fb}]
    For Example~\ref{ex1}, the condition~\eqref{eqn:fb_condition} is satisfied as explained in Section~\ref{sec:thm1}: we define $Y  \sim\pi^r(\cdot\mid X)$, taking $r$ steps forward in the chain from $X$. Then $X,X'\mid Y\iidsim \pi^r(\cdot\mid Y)$. 
    
    For Example~\ref{ex2}, the condition~\eqref{eqn:fb_condition} is satisfied by defining $Y\sim \pi^{-r}(\cdot\mid X)$, i.e., taking $r$ steps backward in the chain. Then we again have $X,X'\mid Y\iidsim \pi^r(\cdot\mid Y)$. 
    
    For Example~\ref{ex3}, the condition~\eqref{eqn:fb_condition} is satisfied by defining $Y\sim [ \pi^{-1}\circ \pi]^{r/2}(\cdot \mid X)$ (if $r$ is even), or $Y\sim \big[\pi\circ [ \pi^{-1}\circ \pi]^{(r-1)/2}\big](\cdot \mid X)$ (if $r$ is odd), and we then have $X,X'\mid Y\iidsim  [ \pi^{-1}\circ \pi]^{r/2}(\cdot \mid Y)$ (if $r$ is even) or $X,X'\mid Y\iidsim \big[[ \pi^{-1}\circ \pi]^{(r-1)/2}\circ\pi^{-1}\big](\cdot \mid Y)$ (if $r$ is odd).
\end{proof}

\begin{figure}[t]
\centering
\begin{tikzpicture}
\begin{scope}[xshift=0cm,yshift=0cm]

\node at (-8,2) {Example~\ref{ex1}};
\node (X) at (-5,2) {$X$};
\node (Y) at (-0.6,2) {$Y$};
\node (Xp) at (3.8,2) {$X'$};
\draw[<->] (-4.6,2) -- (-4.05,2);
\draw[<->] (-3.85,2) -- (-3.3,2);
\node at (-2.8,2) {$\dots$};
\draw[<->] (-2.3,2) -- (-1.75,2);
\draw[<->] (-1.55,2) -- (-1,2);
\draw[<->] (-0.2,2) -- (0.35,2);
\draw[<->] (0.55,2) -- (1.1,2);
\node at (1.6,2) {$\dots$};
\draw[<->] (2.1,2) -- (2.65,2);
\draw[<->] (2.85,2) -- (3.4,2);
\draw [decorate,decoration={brace,amplitude=5pt,mirror,raise=2ex}]
  (-4.5,2) -- (-1.1,2) node[midway,yshift=-2em]{$r$ steps};
\draw [decorate,decoration={brace,amplitude=5pt,mirror,raise=2ex}]
  (-0.1,2) -- (3.3,2) node[midway,yshift=-2em]{$r$ steps};

\node at (-8,0) {Example~\ref{ex2}};
\node (X) at (-5,0) {$X$};
\node (Y) at (-0.6,0) {$Y$};
\node (Xp) at (3.8,0) {$X'$};
\draw[<-] (-4.6,0) -- (-4.05,0);
\draw[<-] (-3.85,0) -- (-3.3,0);
\node at (-2.8,0) {$\dots$};
\draw[<-] (-2.3,0) -- (-1.75,0);
\draw[<-] (-1.55,0) -- (-1,0);
\draw[->] (-0.2,0) -- (0.35,0);
\draw[->] (0.55,0) -- (1.1,0);
\node at (1.6,0) {$\dots$};
\draw[->] (2.1,0) -- (2.65,0);
\draw[->] (2.85,0) -- (3.4,0);
\draw [decorate,decoration={brace,amplitude=5pt,mirror,raise=2ex}]
  (-4.5,0) -- (-1.1,0) node[midway,yshift=-2em]{$r$ backward steps};
\draw [decorate,decoration={brace,amplitude=5pt,mirror,raise=2ex}]
  (-0.1,0) -- (3.3,0) node[midway,yshift=-2em]{$r$ forward steps};

\node at (-8,-2) {Example~\ref{ex3}};
\node (X) at (-5,-2) {$X$};
\node (Y) at (-0.6,-2) {$Y$};
\node (Xp) at (3.8,-2) {$X'$};
\draw[->] (-4.6,-2) -- (-4.05,-2);
\draw[<-] (-3.85,-2) -- (-3.3,-2);
\node at (-2.6,-2) {$\dots$};
\draw[<-] (-2.3,-2) -- (-1.75,-2);
\draw[->] (-1.55,-2) -- (-1,-2);
\draw[<-] (-0.2,-2) -- (0.35,-2);
\draw[->] (0.55,-2) -- (1.1,-2);
\node at (1.6,-2) {$\dots$};
\draw[->] (2.1,-2) -- (2.65,-2);
\draw[<-] (2.85,-2) -- (3.4,-2);
\draw [decorate,decoration={brace,amplitude=5pt,mirror,raise=2ex}]
  (-4.5,-2) -- (-1.1,-2) node[midway,yshift=-3em]{\begin{tabular}{c}$r$ steps (alternate\\ forward \& backward)\end{tabular}};
\draw [decorate,decoration={brace,amplitude=5pt,mirror,raise=2ex}]
  (-0.1,-2) -- (3.3,-2) node[midway,yshift=-3em]{\begin{tabular}{c}$r$ steps (alternate\\ forward \& backward)\end{tabular}};

\end{scope}
\end{tikzpicture}
\caption{An illustration of the forward--backward probability kernels $P(\cdot\mid X)$ considered in Examples~\ref{ex1},~\ref{ex2}, and~\ref{ex3}. In the figure, a forward arrow $\longrightarrow$ represents a forward step in the Markov chain (with Markov kernel $\pi$), while the reverse arrow $\longleftarrow$ represents a backward step in the chain (i.e., sample from $\pi^{-1}$). In the top panel (Example~\ref{ex1}), the Markov chain is reversible, thus $\pi$ and $\pi^{-1}$ are equivalent and so each step is represented by a two-sided arrow $\longleftrightarrow$.}
\label{fig:illustrate_MCMC_options}
\end{figure}

\subsection{Bayesian posterior predictive inference}
While the examples so far have all considered variants of MCMC sampling, the general framework of drawing copies of $X$ in order to perform inference appears more broadly across different statistical settings, and our results may be applied in other contexts as well. Here we consider the problem of Bayesian posterior predictive inference.

Consider a Bayesian model consisting of a prior $\nu$ on $\theta\in\Theta$, and a  family of likelihoods $\{f_\theta:\theta\in\Theta\}$. Suppose we observe data $X\in\cX$ and would like to validate this Bayesian model. We can define the posterior predictive distribution (PPD) given $X$ as
    \[f_{\textnormal{PPD}}(x \mid X) = \Ep{\theta\sim \nu(\cdot\mid X)}{f_\theta(x)},\]
    where $\nu(\cdot\mid X)$ denotes the posterior distribution of $\theta\mid X$. We may then use the PPD for model validation, by sampling copies $X_1,\dots,X_m\mid X\sim f_{\textnormal{PPD}}(\cdot\mid X)$ and comparing these copies to $X$ (with some test statistic) \citep{meng1994posterior}.

We therefore have the following proposition:
\begin{proposition}\label{prop:PPD}
     Given data $X\in\cX$, let the copies $X_1,\dots,X_m$ be generated as
     \[X_1,\dots,X_m\mid X \iidsim f_{\mathrm{PPD}}(\cdot\mid X).\]
     Then, if $X$ is drawn from the Bayesian model specified by prior $\nu$ and likelihoods $\{f_\theta\}$, we have
     \[\PP{p_m\leq \alpha}\leq 2\alpha\textnormal{ for all }\alpha\in[0,1],\]
     for any (finite or infinite) $m$.
\end{proposition}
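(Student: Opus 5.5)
The plan is to reduce Proposition~\ref{prop:PPD} directly to Theorem~\ref{thm1}, by exhibiting the PPD kernel as a forward--backward probability kernel with respect to the appropriate null. Here the null distribution $Q_0$ is the prior predictive (marginal) distribution of $X$ under the Bayesian model: draw $\theta\sim\nu$, then $X\mid\theta\sim f_\theta$. We take $P(\cdot\mid X)=f_{\mathrm{PPD}}(\cdot\mid X)$, and the natural latent variable is $Y=\theta$.

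The verification of condition~\eqref{eqn:fb_condition} goes as follows. Let $X\sim Q_0$, and let $X'\mid X\sim f_{\mathrm{PPD}}(\cdot\mid X)$. By definition of the PPD, sampling $X'$ is equivalent to first drawing $\theta\mid X\sim\nu(\cdot\mid X)$ and then drawing $X'\mid(\theta,X)\sim f_\theta$, with $X'$ independent of $X$ given $\theta$. Since $X\sim Q_0$, the pair $(\theta,X)$ drawn as ($X$ first, then $\theta$ from the posterior) has the same joint law as the generative model ($\theta\sim\nu$, then $X\sim f_\theta$). Hence the joint law of $(\theta,X,X')$ is: $\theta\sim\nu$, then $X,X'\mid\theta\iidsim f_\theta$. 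This is exactly the statement that $X,X'$ are conditionally i.i.d.\ given $Y=\theta$, which is the forward--backward structure: a forward step $X\to\theta$ via the posterior, then a backward step $\theta\to X'$ via the likelihood.

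With this in hand, Theorem~\ref{thm1} yields $\PP{p_m\leq\alpha}\leq 2\alpha$ for every finite $m$. For $m=\infty$, I would invoke the remark following Theorem~\ref{thm1}: since $p_m\to p_\infty$ almost surely, the bound passes to the limit. Concretely, for any $\alpha'>\alpha$, Fatou's lemma gives $\PP{p_\infty<\alpha'}\leq\liminf_m\PP{p_m\leq\alpha'}\leq 2\alpha'$. Letting $\alpha'\downarrow\alpha$ then gives $\PP{p_\infty\le\alpha}\le 2\alpha$ by right-continuity.

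The only mildly delicate step is the measure-theoretic justification that the posterior $\nu(\cdot\mid X)$ exists as a regular conditional distribution, and that the two-stage description of $f_{\mathrm{PPD}}$ equals the kernel as written. This is covered by the standing assumption that the spaces are standard Borel. Beyond that, the proof is a direct application of Theorem~\ref{thm1}.
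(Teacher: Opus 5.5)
Your proposal is correct and matches the paper's proof: taking $Y=\theta$ makes $X,X'$ conditionally i.i.d.\ given $\theta$, so $f_{\mathrm{PPD}}$ is a forward--backward kernel relative to the prior predictive null, and Theorem~\ref{thm1} applies. Your Fatou limiting argument for $m=\infty$ is valid but unnecessary, since the paper's proof of Theorem~\ref{thm1} establishes the bound for $p_\infty$ directly (it is the first step, before the finite-$m$ case).
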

(This result was established by \citet{meng1994posterior} for the case $m=\infty$; the framework of Theorem~\ref{thm1} allows us to generalize to finite $m$ at no cost.)
\begin{proof}[Proof of Proposition~\ref{prop:PPD}]
       By construction, the probability kernel $f_{\textnormal{PPD}}(\cdot\mid X)$ is compatible with the null, as in~\eqref{eqn:P_compatible_with_null}.
    Moreover, $f_{\textnormal{PPD}}$
    is a forward--backward probability kernel relative to~\eqref{eqn:fb_condition}: we can verify this by observing that the condition~\eqref{eqn:fb_condition} is satisfied by choosing $Y=\theta$ (since sampling from $f_{\textnormal{PPD}}$ is equivalent to first sampling $\theta\mid X$, and then resampling $X\mid\theta$, under the Bayesian model). Therefore, the result holds by Theorem~\ref{thm1}.
\end{proof}

\subsection{Constrained permutation testing}

In permutation testing, Constrained permutation schemes arise when full permutation is mathematically
valid but scientifically or observationally undesirable, or when only certain
rearrangements are admissible. Examples include restricted-position
permutations for truncated data, local phenotype permutations in genetic
association studies with population stratification, spatially constrained
permutations in ecological association testing, and balanced permutations in
treatment-effect problems. These constrained sets are often not subgroups, so
the usual exact permutation argument can fail.

To describe the setting more precisely, we begin by reviewing a classical (full) permutation test. Given $n$ data points $Y_1,\dots,Y_n$ along with a test statistic $T:\cY^n\to \R$, the classical permutation test compares a test statistic of the data to all possible permutations,
\[p_\infty = \frac{\sum_{\sigma\in\cS_n}\One{T(Y_{\sigma(1)},\dots,Y_{\sigma(n)})\geq T(Y_1,\dots,Y_n)}}{n!},\]
or we may consider a Monte Carlo version by choosing $m$ permutations,
\[p_m = \frac{1+\sum_{i=1}^m\One{T(Y_{\sigma_i(1)},\dots,Y_{\sigma_i(n)})\geq T(Y_1,\dots,Y_n)}}{m+1}\textnormal{ where }\sigma_1,\dots,\sigma_m\iidsim\textnormal{Unif}(\cS_n),\]
where $\cS_n$ denotes the group of permutations on $[n]=\{1,\dots,n\}$.
Each of these constructions offers a valid p-value under the null hypothesis that $Y_1,\dots,Y_n$ are exchangeable.

In some settings, we might restrict our attention to only a certain subset of permutations: that is, given some subset $\cS_{n,\mathrm{constr}}\subseteq\cS_n$ of permutations satisfying some constraint, we might compute 
\begin{equation}\label{eqn:p_infty_constr}p_\infty = \frac{\sum_{\sigma\in\cS_{n,\mathrm{constr}}}\One{T(Y_{\sigma(1)},\dots,Y_{\sigma(n)})\geq T(Y_1,\dots,Y_n)}}{|\cS_{n,\mathrm{constr}}|},\end{equation}
or its Monte Carlo version,
\begin{equation}\label{eqn:p_m_constr}p_m = \frac{1+\sum_{i=1}^m\One{T(Y_{\sigma_i(1)},\dots,Y_{\sigma_i(n)})\geq T(Y_1,\dots,Y_n)}}{m+1}\textnormal{ where }\sigma_1,\dots,\sigma_m\iidsim\textnormal{Unif}(\cS_{n,\mathrm{constr}}).\end{equation}
For instance, this arises in methods such \emph{balanced permutations} when testing for the presence of a treatment effect, or \emph{restricted-position permutations} in settings where only certain permutations (e.g., constrained with respect to distance) are permitted (we will discuss these two examples next). 

However, these types of constrained permutation tests may lose validity, because we are computing $p_\infty$ (or $p_m$) using a subset of permutations $\cS_{n,\mathrm{constr}}\subseteq\cS_n$ that is not necessarily a subgroup \citep{hemerik2018exact}. Nonetheless, as we will see next, our theoretical guarantees can restore some Type-I error control in certain examples.

\paragraph{Notation for permutation tests of marginal independence.}
Before we discuss constrained permutation tests, we first establish some basic notation for considering the usual (unconstrained) permutation test, in the setting of testing a hypothesis of marginal independence. Consider data pairs $(A_1,Y_1),\dots,(A_n,Y_n)$ drawn i.i.d.\ from some joint distribution on $(A,Y)$, where we wish to test $H_0:A\independent Y$. Here $A$ is a covariate (e.g., a treatment assignment), while $Y$ is the response. The standard permutation test compares the dataset $\{(A_i,Y_i)\}_{i\in[n]}$ against permuted versions, $\{(A_i,Y_{\sigma(i)})\}_{i\in[n]}$, which (under $H_0$) has the same joint distribution as the original data. To align with our notation above, we now treat $A_1,\dots,A_n$ as fixed (i.e., we condition on these values), while the response values $Y_1,\dots,Y_n$ are random---and under $H_0$, the $Y_i$'s are i.i.d., even after conditioning on the $A_i$'s. Therefore, abusing notation, any test statistic that depends on $(A_1,Y_1),\dots,(A_n,Y_n)$ may be written as a function of only the response values $Y_1,\dots,Y_n$, since the $A_i$'s are treated as fixed; for instance, we might consider $T(Y_1,\dots,Y_n) = \textnormal{Corr}((A_1,\dots,A_n),(Y_1,\dots,Y_n))$, the sample correlation between $A$ and $Y$. That is, the test statistic $T$ may implicitly depend on the observed covariate values $A_1,\dots,A_n$.

With this notation in place we are now ready to examine two examples of constrained permutation tests.

\subsubsection{Restricted-position permutations for independence testing}\label{sec:dist_perm}
Although the classical permutation test for \(A\independent Y\) uses all $n!$ permutations,
many association-testing problems impose additional observational or scientific
constraints on which rearrangements are meaningful. In truncated-data problems,
for example, a response value may be moved only to positions where it would have
been observable; this leads to permutation tests over restricted positions
\citep{efron1999nonparametric,diaconis2001statistical,chen2007sequential}. In genetic
association studies with population stratification, local permutation methods
shuffle phenotypes only among ancestry-near individuals in order to avoid
unrealistic full permutations \citep{mullaert2021taking}. Spatially restricted
permutations play a similar role in ecological association testing
\citep{crabot2019testing}.

Motivated by these examples, suppose that each observation has an ancillary
design feature \(B_i\), such as a spatial coordinate, ancestry score, batch
descriptor, or observation window. 
Suppose that we only consider a permutation $\sigma$ to be permissible if it lies in the set
\[
\cS_{n,\mathrm{dist},\Delta}    =
    \{\sigma\in \cS_n:\max_i d(B_i,B_{\sigma(i)})\le \Delta\},
\]
so that each response is moved only to a nearby or admissible design position. The aim, then, is to design a permutation test that uses only permutations from this set, in order to improve scientific plausibility or power.

In general, the set $\cS_{n,\mathrm{dist},\Delta}$ need not be a subgroup of $\cS_n$. Consequently, in some settings, it may occur that the naive constrained permutation
p-value is not be valid even under a global exchangeability null \citep{hemerik2018exact}---see Example~\ref{ex:dist_worstcase} below for an instance of extreme loss of Type-I error control. To remedy this,  the following result
shows that composing two independent constrained permutations restores the
forward--backward structure and yields a finite-sample guarantee.\footnote{See also \citet[Theorems 3.1,3.2]{ramdas2023permutation} for results related to this proposition. Specifically, the result of Proposition~\ref{prop:dist_perm} for $p_\infty$ is also proved there; however the finite-sample result appears in a different form, i.e., their work proves a factor-of-$2$ bound for a different construction of $p_m$.}
\begin{proposition}\label{prop:dist_perm}
    Let $(A_1,B_1,Y_1),\dots,(A_n,B_n,Y_n)$ be i.i.d. Let $\sigma_1,\dots,\sigma_m,\sigma'_1,\dots,\sigma'_m\in\cS_{n,\mathrm{dist},\Delta/2}$ be sampled uniformly at random, and define
    \[p_m = \frac{1+\sum_{k=1}^m \One{T(Y_{\sigma_k(\sigma'_k(1))},\dots,Y_{\sigma_k(\sigma'_k(n))})\geq T(Y_1,\dots,Y_n)}}{m+1}\]
    along with its derandomized version,
    \[p_\infty = \frac{\sum_{\sigma,\sigma'\in\cS_{n,\mathrm{dist},\Delta/2}} \One{T(Y_{\sigma(\sigma'(1))},\dots,Y_{\sigma(\sigma'(n))})\geq T(Y_1,\dots,Y_n)}}{|\cS_{n,\mathrm{dist},\Delta/2}|^2},\]
    for the case $m=\infty$.
    Then, under the null $Y\independent (A,B)$,
    \[\PP{p_m\leq \alpha}\leq 2\alpha\textnormal{ for all }\alpha\in[0,1].\]
\end{proposition}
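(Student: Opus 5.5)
The plan is to reduce Proposition~\ref{prop:dist_perm} to Theorem~\ref{thm1} by exhibiting the sampling scheme as a forward--backward probability kernel. First I will set up the reduction. Condition on $(A_1,B_1),\dots,(A_n,B_n)$, so that the set $\cS:=\cS_{n,\mathrm{dist},\Delta/2}$ and the statistic $T$ are fixed. Under the null $Y\independent(A,B)$, the vector $X=(Y_1,\dots,Y_n)$ is then i.i.d., and hence exchangeable; its law is $Q_0$. Define the kernel $P(\cdot\mid X)$ as the law of $X\circ\sigma\circ\sigma' := (Y_{\sigma(\sigma'(1))},\dots,Y_{\sigma(\sigma'(n))})$ with $\sigma,\sigma'\iidsim\textnormal{Unif}(\cS)$. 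The copies in $p_m$ are then exactly i.i.d.\ draws from $P(\cdot\mid X)$, and $p_\infty$ is exactly the quantity~\eqref{eqn:sample_copies_P_infty}. So if $P$ satisfies~\eqref{eqn:fb_condition}, Theorem~\ref{thm1} gives the bound conditionally on $(A,B)$, and integrating over $(A,B)$ gives the unconditional bound.

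The key step is to verify~\eqref{eqn:fb_condition} by choosing the latent variable $Y$ (the vector, not the responses) as an intermediate point. Write $X'=(X\circ\sigma)\circ\sigma'$. The natural latent variable is $Z:=X\circ\sigma$, or its inverse-permutation counterpart, chosen so that $X$ and $X'$ are each obtained from $Z$ by applying an independent uniform element of the \emph{same} set.

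Here the set $\cS$ is closed under inversion. Indeed, if $\max_i d(B_i,B_{\sigma(i)})\le\Delta/2$, then reindexing $j=\sigma(i)$ shows that $\sigma^{-1}$ satisfies the same bound, provided $d$ is symmetric. So $\sigma^{-1}\sim\textnormal{Unif}(\cS)$ whenever $\sigma\sim\textnormal{Unif}(\cS)$.

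Next I would define the joint law of $(Z,X,X')$ directly: draw $Z\sim Q_0$, then draw $\tau,\tau'\iidsim\textnormal{Unif}(\cS)$ independently of $Z$, and set $X=Z\circ\tau^{-1}$ and $X'=Z\circ\tau'$. Given $Z$, the vectors $X$ and $X'$ are then conditionally i.i.d., because $\tau^{-1}$ and $\tau'$ are i.i.d.\ uniform on $\cS$ by inversion-closure. It remains to check that this joint law of $(X,X')$ matches the one generated by $X\sim Q_0$, $X'\sim P(\cdot\mid X)$. Exchangeability of $Q_0$ gives $X=Z\circ\tau^{-1}\sim Q_0$ and $X\independent\tau$. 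Then $X'=Z\circ\tau'=X\circ\tau\circ\tau'$ with $\tau,\tau'$ i.i.d.\ uniform on $\cS$ and independent of $X$, which is exactly the composition $\sigma\circ\sigma'$ in the kernel.

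The main obstacle is getting the composition order and the inverses right in this last step, i.e., checking that $Z\circ\tau\circ\tau'$ corresponds to $Y_{\sigma(\sigma'(i))}$. I would verify this index by index, using $(Z\circ\tau)_i = Z_{\tau(i)}$. Two further points need care. First, the argument uses symmetry of $d$ for inversion-closure, which should be stated as an assumption. Second, the independence $X\independent\tau$ relies on $Q_0$ being exchangeable and $\tau$ being independent of $Z$. This holds because the $Y_i$ are i.i.d.\ given $(A,B)$ under the null.
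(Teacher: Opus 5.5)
Your proposal is correct and is essentially the paper's proof. The paper writes $P=\pi^2$, where $\pi(\cdot\mid x)$ is uniform over $x_\sigma$ for $\sigma\in\cS_{n,\mathrm{dist},\Delta/2}$; it gets stationarity from exchangeability of $X$ given $(A,B)$ and reversibility from inversion-closure of the constrained set, then invokes Example~\ref{ex1} with $r=1$, whose latent variable is exactly your one-step point $Z=X\circ\tau$. Your explicit coupling just unpacks that step, and your remark that inversion-closure requires $d$ to be symmetric is a fair point the paper leaves implicit by treating $d$ as a distance.
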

\noindent The permutation test described in this proposition is different from the constrained one above in~\eqref{eqn:p_m_constr} and~\eqref{eqn:p_infty_constr} (with the constraint set $\cS_{n,\mathrm{constr}}=\cS_{n,\mathrm{dist},\Delta}$): the permuted copies of the data are obtained by composing \emph{two} distance-constrained permutations $\sigma,\sigma'$, drawn from a smaller set $\cS_{n,\mathrm{dist},\Delta/2}$.  Nonetheless, the intuition remains the same: since the composition satisfies $\sigma\circ\sigma'\in\cS_{n,\mathrm{dist},\Delta}$, we are again running the test using only permitted permutations.

\begin{proof}[Proof of Proposition~\ref{prop:dist_perm}]
    Let $X=(Y_1,\dots,Y_n)$ be the original data, and for any permutation $\sigma$, let $X_\sigma = (Y_{\sigma(1)},\dots,Y_{\sigma(n)})$ denote the permuted data. 
    Define    
    probability kernel
    \[P(\cdot\mid x) = \frac{1}{|\cS_{n,\mathrm{dist},\Delta/2}|^2}\sum_{\sigma,\sigma'\in\cS_{n,\mathrm{dist},\Delta/2}} 
    \delta_{x_{\sigma\circ\sigma'}}.\]
    Then $p_m$ (and $p_\infty$) are constructed exactly as in~\eqref{eqn:sample_copies_P} (and~\eqref{eqn:sample_copies_P_infty}), so now we only need to verify that $P$ is a forward--backward kernel in order to apply Theorem~\ref{thm1}.
    
    In fact, this is simply an instance of Example~\ref{ex1}: we will obtain $P$ via a reversible Markov chain. Define  a Markov chain with transition kernel
    \[\pi(\cdot\mid x) = \frac{1}{|\cS_{n,\mathrm{dist},\Delta/2}|}\sum_{\sigma\in\cS_{n,\mathrm{dist},\Delta/2}} 
    \delta_{x_\sigma}.\]
    Then clearly, by construction, $P(\cdot\mid x) = \pi^2(\cdot\mid x)$, i.e., the copies are drawn by taking $s=2$ steps along the chain. Since $X=(Y_1,\dots,Y_n)$ is exchangeable (note that this holds conditionally on $A_1,\dots,A_n,B_1,\dots,B_n$), the distribution of $X$ is stationary under $\pi$. Moreover, $\sigma\in\cS_{n,\mathrm{dist},\Delta/2}$ if and only if $\sigma^{-1}\in\cS_{n,\mathrm{dist},\Delta/2}$, by definition of the constrained set, and consequently $\pi = \pi^{-1}$, i.e., the Markov chain is reversible. 
\end{proof}

Finally, we verify that without the modification proposed in the proposition, the procedure may lose Type-I error control.
\begin{example}\label{ex:dist_worstcase}
    Fix a large dimension $d\gg n^2$, and let $(A_i,B_i,Y_i)$, $i=1,\dots,n$, be i.i.d.\ copies of $(A,B,Y)$ where
    \[A = 0, \quad B \sim \frac{1}{n}\cdot\delta_{\mathbf{0}_d} + \left(1-\frac{1}{n}\right)\cdot \textnormal{Unif}(\{\mathbf{e}_1,\dots,\mathbf{e}_d\}), \quad Y\sim\textnormal{Unif}[0,1],\] with $A,B,Y$ mutually independent,
    where $\mathbf{e}_i$ is the $i$th canonical basis vector in $\R^d$. We choose the parameter $\Delta=1$ for the distance constraint, and define $T=T(Y_1,\dots,Y_n) = \prod_{i< j}(Y_i-Y_j)$.

    Let $\mathcal{E}_B$ be the event that $B_1,\dots,B_n$ are all distinct and that $B_i=\mathbf{0}_d$ for one index $i$. Then
    \[\PP{\mathcal{E}_B} = \underbrace{n \cdot\frac{1}{n}\cdot\left(1-\frac{1}{n}\right)^{n-1}}_{\substack{\textnormal{probability of exactly one $B_i=\mathbf{0}_d$}\\\textnormal{$(\approx e^{-1}$ if $n$ is large)}}}\cdot \underbrace{1\cdot\left(1-\frac{1}{d}\right)\cdot\left(1-\frac{2}{d}\right)\cdot\dots\cdot\left(1-\frac{n-2}{d}\right)}_{\substack{\textnormal{probability that the remaining $B_i$'s are distinct}\\\textnormal{($\approx 1$ if $d\gg n^2$)}}} \approx e^{-1}.\]
    On the event $\mathcal{E}_B$, due to choosing $\Delta=1$, we have
    \[\cS_{n,\mathrm{dist},\Delta} = \{\mathrm{Id}\}\cup\big\{(i,j) : j\in[n]\setminus\{i\}\big\},\]
    where $i$ is the unique index for which $B_i=\mathbf{0}_d$, and
    where $(i,j)$ denotes the permutation swapping indices $i$ and $j$.
    
    Next let $\mathcal{E}_Y$ be the event that $T>0$, which by construction has probability $\PP{\mathcal{E}_Y}=\frac{1}{2}$. On this event, by construction of the test statistic, any permutation $\sigma$ that is a swap of two indices will lead to $T(Y_{\sigma(1)},\dots,Y_{\sigma(n)})<0$. 
     Therefore,
    on the event $\mathcal{E}_B\cap \mathcal{E}_Y$ we have $T(Y_{\sigma(1)},\dots,Y_{\sigma(n)})<0$ for all $n-1$ non-identity permutations in $\cS_{n,\mathrm{dist},\Delta}$, and consequently $p_\infty = \frac{1}{n}$. Thus
    \[\PP{p_\infty \leq \frac{1}{n}} \geq \PP{\mathcal{E}_B\cap \mathcal{E}_Y} \approx \frac{1}{2e},\]
    which (for large $n$) is a gross violation of the p-value condition.
\end{example}

\subsubsection{Balanced permutation tests}\label{sec:balanced_perm}
For our second example, consider the setting of permutation testing for inference on a treatment effect. The data points are of the form $(A_i,Y_i)$, where $A_i\in\{0,1\}$ denotes a treatment assignment while $Y_i\in\cY$ represents the observed data. Assume for simplicity that there are exactly $n/2$ individuals assigned to each treatment, $A_i=1$ or $A_i=0$. Consider some test statistic $T=T(Y_1,\dots,Y_n)$, which as before is implicitly allowed to depend on $A_1,\dots,A_n$, since we will condition on the treatment assignments.

The premise of \emph{balanced permutation testing} is to restrict to permutations $\sigma$ that satisfy a balance condition:
\[\cS_{n,\mathrm{bal}} = \left\{\sigma\in\cS_n : \sum_{i=1}^n \One{A_i = a,A_{\sigma(i)}=b} =n/4\textnormal{ for each $a,b\in\{0,1\}$}\right\}\]
(where we are implicitly assuming $n$ is a multiple of $4$). Note that the subset $\cS_{n,\mathrm{bal}}$ depends implicitly on the observed treatment assignment vector $A_1,\dots,A_n$.

The idea of restricting to balanced permutations (rather than running a classical permutation test, over all $\pi\in\cS_n$) is that it may improve power to detect a treatment effect (by avoiding correlations between $Y$ and $A$ in the permuted data, as much as possible). However, it is now well-known that this approach can dramatically lose Type-I error control, under the null hypothesis that there is no treatment effect: we may even have a nontrivial probability $\PP{p_\infty = 0}$ (i.e., the event that every balanced permutation $\sigma\in \cS_{n,\mathrm{bal}}$ leads to a test statistic value that is smaller than the observed value of $T$) \citep{southworth2009properties}.

However, the following result (which applied Theorem~\ref{thm1}) demonstrates that it is nonetheless possible to bound the loss of Type-I error control.
\begin{proposition}\label{prop:balanced_perm}
    Let $n$ be a positive multiple of $4$, and let $A_1,\dots,A_n\in\{0,1\}$ be fixed, with $\sum_{i=1}^n A_i = n/2$. Let $Y_1,\dots,Y_n$ be exchangeable.
 Define $p_m$ as in~\eqref{eqn:p_m_constr} for finite $m$, or as in~\eqref{eqn:p_infty_constr} for $m=\infty$ (with the set of balanced permutations $\cS_{n,\mathrm{bal}}$ in place of $\cS_{n,\mathrm{constr}}$). Then
    \[\PP{p_m \leq \alpha}\leq 2\alpha\left(1-\frac{1}{\mathrm{Bal}_n}\right) + \frac{2}{\mathrm{Bal}_n}\textnormal{ for all $\alpha\in[0,1]$},\]
    where $\mathrm{Bal}_n$ is defined as
    \[\mathrm{Bal}_n = \max\left\{k \ : \ \exists \sigma_1,\dots,\sigma_k\in\cS_n, \ \sigma_i\circ\sigma_j^{-1}\in\cS_{n,\mathrm{bal}} \ \forall i\neq j\in[k]\right\},\]
    i.e., the maximum number of permutations such that $\sigma_i\circ\sigma_j^{-1}$ is balanced for each $i\neq j$.
\end{proposition}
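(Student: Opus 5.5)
The plan is to reduce to Theorem~\ref{thm1}. Let $k=\mathrm{Bal}_n$. The kernel $P_{\mathrm{bal}}(\cdot\mid x)=\mathrm{Unif}\{x_\sigma:\sigma\in\cS_{n,\mathrm{bal}}\}$ is not itself forward--backward. I will instead work with the mixture kernel
\[
P'(\cdot\mid x)=\Big(1-\tfrac1k\Big)P_{\mathrm{bal}}(\cdot\mid x)+\tfrac1k\,\delta_x,
\]
show that $P'$ is forward--backward relative to the exchangeable law of $X=(Y_1,\dots,Y_n)$, and then transfer the bound $2\alpha$ for $P'$ into the stated bound for $P_{\mathrm{bal}}$.

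\textbf{Step 1: symmetry of the balanced set.} Let $H=\{\rho\in\cS_n: A_{\rho(i)}=A_i\ \forall i\}$ be the stabilizer of the treatment vector. Membership in $\cS_{n,\mathrm{bal}}$ is determined by the $2\times2$ table of counts $\#\{i:A_i=a,A_{\sigma(i)}=b\}$. Hence $\cS_{n,\mathrm{bal}}$ is invariant under $\sigma\mapsto\rho_1\sigma\rho_2$ for $\rho_1,\rho_2\in H$. I expect this action to be transitive on $\cS_{n,\mathrm{bal}}$, since any two permutations with the same count table differ by relabelings within treatment groups. Consequently, if $\sigma_0\in\cS_{n,\mathrm{bal}}$ is fixed and $\rho_1,\rho_2\iidsim\mathrm{Unif}(H)$, then $\rho_1\sigma_0\rho_2\sim\mathrm{Unif}(\cS_{n,\mathrm{bal}})$.

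\textbf{Step 2: hub construction for $P'$.} Fix $\sigma_1,\dots,\sigma_k$ attaining $\mathrm{Bal}_n$. Draw $J\sim\mathrm{Unif}[k]$ and the $H$-randomization independently. Define a hub $W$ together with these randomizations, so that $X=W_{\tau\sigma_J}$ for a suitable random $\tau$ built from $H$-elements. Then the copy $X'=W_{\tau\sigma_{J'}}$ with $J'\sim\mathrm{Unif}[k]$ independent satisfies:
\begin{itemize}
\item by exchangeability of $X$, the pair $(W,\tau)$ is independent of $J$, so that conditionally on $Y=(W,\tau)$ the vectors $X,X'$ are i.i.d.;
\item the map $X\mapsto X'$ is the identity when $J'=J$ (probability $1/k$);
\item otherwise $X\mapsto X'$ is a permutation of the form $\tau\,\sigma_{J'}\sigma_J^{-1}\tau^{-1}$ applied to $X$, which after the $H$-averaging of Step 1 is uniform on $\cS_{n,\mathrm{bal}}$.
\end{itemize}
This shows that $P'$ is a forward--backward kernel.

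I expect the bookkeeping in Step 2 to be the main obstacle. One must arrange the $H$-randomization, possibly as a two-sided product $\rho_1(\cdot)\rho_2$ rather than a conjugation, so that two properties hold at once: the law of the copy given $X$ is exactly the mixture $P'$, and $X$ is itself obtained from the hub in the same way as the copy. I would first try taking the hub to be $X$ permuted by $(\rho\sigma_J)^{-1}$ with a single $\rho\in H$. If conjugation alone does not reach all of $\cS_{n,\mathrm{bal}}$ uniformly, I would then exploit exchangeability of $X$ to absorb a second independent $H$-factor into the hub.

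\textbf{Step 3: transfer from $P'$ to $P_{\mathrm{bal}}$.} For $m=\infty$ this is immediate. Writing $p'_\infty$ for the derandomized p-value under $P'$,
\[
p'_\infty=\Big(1-\tfrac1k\Big)p_\infty+\tfrac1k .
\]
So on the event $\{p_\infty\le\alpha\}$ we have $p'_\infty\le\alpha(1-1/k)+1/k$, and Theorem~\ref{thm1} applied to $p'_\infty$ yields the claimed bound.

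For finite $m$, I would couple the two procedures. Draw $m$ copies from $P'$, of which $N\sim\mathrm{Bin}(m,1/k)$ equal $X$ and the remaining ones are uniform balanced draws. I would try to bound $\PP{p_m\le\alpha}$ by averaging the $P'$-guarantee over $N$ and using the monotonicity of $p$ in the number of exceedances. Failing that, I would work with the representation of $p_m$ as an average, in the style of the finite-$m$ argument behind Theorem~\ref{thm1}, applied directly to the mixture kernel.
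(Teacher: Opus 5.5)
Your overall architecture is the paper's: the mixture kernel with weight $1/\mathrm{Bal}_n$ on $\delta_X$, the identity $p'_\infty=(1-\frac1k)p_\infty+\frac1k$, and the $m=\infty$ conclusion via the p$^*$ bound. Your Step~1 transitivity claim is also correct. But Step~2 as written does not produce $P'$.

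With one $\tau$ shared by all spokes, the move $X\mapsto X'$ for $J'\neq J$ is a conjugate of one of the $k(k-1)$ fixed permutations $\sigma_{J'}\sigma_J^{-1}$, and conjugation preserves cycle type. Take $n=4$, $A=(1,1,0,0)$, so $k=\mathrm{Bal}_4=3$ by Lemma~\ref{lem:hadamard}. The $16$ balanced permutations are $4$ transpositions, $8$ three-cycles and $4$ four-cycles, so the uniform law gives transpositions mass $1/4$. Your construction gives every cycle type a mass in $\frac16\mathbb{Z}$, so it cannot be uniform. A second $H$-factor shared by all spokes either cancels or just adds another conjugation. Drawing a fresh factor for each copy, rather than one per index, destroys the identity move at $J'=J$. (Also, with the paper's convention $(X_\sigma)_\rho=X_{\sigma\circ\rho}$, your $X=W_{\tau\sigma_J}$ and $X'=W_{\tau\sigma_{J'}}$ give $X'=X_{\sigma_J^{-1}\sigma_{J'}}$, which need not be balanced; your formula presumes a left action.)

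The fix is one independent factor per spoke index. Put $\rho_1,\dots,\rho_k\iidsim\mathrm{Unif}(H)$ into the hub, and use $\rho_J\sigma_J$ and $\rho_{J'}\sigma_{J'}$ in place of $\tau\sigma_J$ and $\tau\sigma_{J'}$. Then $J'=J$ still gives the identity. For $J'\neq J$ the move is $\rho_{J'}(\sigma_{J'}\sigma_J^{-1})\rho_J^{-1}$, with independent uniform factors on both sides, hence uniform on the double coset, which is all of $\cS_{n,\mathrm{bal}}$ by transitivity. This is what the paper's uniformly random tuple from $\tilde\cS_{n,\mathrm{bal}}$ achieves, since that set is invariant under independent left $H$-multiplication of each coordinate.

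The more substantive gap is the finite-$m$ case: neither suggested route, as described, yields the stated bound. At finite $m$ there is no deterministic relation between $p'_m$ and $p_m$. The natural coupling only gives $p'_m\le p_m+N/(m+1)$ with $N\sim\mathrm{Bin}(m,1/k)$ random. That leads to bounds of the form $2\alpha+2/k$ plus a deviation term for $N$, not $2\alpha(1-\frac1k)+\frac2k$. Pulling the p$^*$ property back through the affine map gives $p_\infty\dcxleq (U-\frac1k)/(1-\frac1k)$. But this comparison variable is negative with probability $1/k$, so it cannot serve as a Binomial parameter in Proposition~\ref{prop:ordering}.

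The missing idea is to use only the $m=\infty$ CDF bound together with concavity:
\begin{itemize}
\item Let $V$ have CDF $F(t)=\min\{2t(1-\frac1k)+\frac2k,1\}$. Then $p_\infty\stgeq V$.
\item By Proposition~\ref{prop:ordering}, $p_m\stgeq (B_V+1)/(m+1)$ with $B_V\mid V\sim\mathrm{Bin}(m,V)$.
\item Since $F$ is concave, Proposition~\ref{prop:binomial_concave} gives $(B_V+1)/(m+1)\stgeq V$. It is proved by writing $V\eqd AU$ with $A\independent U$ (Khintchine's theorem), and handling $V\sim\mathrm{Unif}[0,a]$ by conditioning the uniform case of Lemma~\ref{lem:binomial_uniform} on $\{V\le a\}$.
\end{itemize}
Hence $\PP{p_m\le\alpha}\le F(\alpha)$, which is the stated bound.
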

\noindent See Appendix~\ref{app:balanced_perm} for the proof.

To interpret this bound, we remark that we expect $\mathrm{Bal}_n\asymp n$ (see Appendix~\ref{app:hadamard} for discussion). Taking this claim as given, this tells us that the inflation of Type-I error, for the setting of balanced permutation tests, is essentially bounded by a factor of $2$. However, at extremely small values of $\alpha$, the second term will be dominant: since $\mathrm{Bal}_n\asymp n$, values of $p_m$ that are below $\mathcal{O}(1/n)$ may not be meaningful.

\section{Proofs of main results}

In this section, we provide proofs of our main results, Theorems~\ref{thm1} and~\ref{thm2}. For both theorems, the proof will proceed by first establishing the Type-I error bound for $p_\infty$ (i.e., the case $m=\infty$), and then proving the finite-$m$ case by comparing the Monte Carlo p-value $p_m$ to its derandomized version $p_\infty$. 

\subsection{Preliminaries}
Before proving the theorems separately, we first develop a general result that will be useful for relating $p_m$ to $p_\infty$.

\subsubsection{Ordering of distributions}\label{sec:p_p*_preliminaries}
We begin by recalling some definitions for ordering of distributions: given random variables $A,B\in\R$, define the \emph{stochastic order},
\[\textnormal{$A\stleq B$ if $\EE{f(A)}\leq \EE{f(B)}$ for all nondecreasing functions $f:\R\to\R$},\]
and the \emph{convex order},
\[\textnormal{$A\cvxleq B$ if $\EE{f(A)}\leq \EE{f(B)}$ for all convex functions $f:\R\to\R$},\]
and the \emph{decreasing convex order},
\[\textnormal{$A\dcxleq B$ if $\EE{f(A)}\leq \EE{f(B)}$ for all nonincreasing and convex functions $f:\R\to\R$},\]
and the \emph{increasing convex order},
\[\textnormal{$A\icxleq B$ if $\EE{f(A)}\leq \EE{f(B)}$ for all nondecreasing and convex functions $f:\R\to\R$}.\]
where in each of these definitions, implicitly we restrict to functions $f$ for which the expected values are defined. By definition, it holds that
\[A\stgeq B \ \Longrightarrow \ A\dcxleq B\]
and
\[A\stleq B \ \Longrightarrow \ A\icxleq B\]
and
\[A\cvxleq B \ \Longrightarrow \ A\dcxleq B\textnormal{ and }A\icxleq B.\]

For a random variable $p\in[0,1]$, if $p\stgeq U$ where $U\sim\textnormal{Unif}[0,1]$, we say that $p$ is \emph{superuniform} (and might also write $p\stgeq\textnormal{Unif}[0,1]$). In this case, $p$ is a valid p-value, i.e., $\PP{p\leq\alpha}\leq\alpha$ for all $\alpha$. On the other hand, if $p\dcxleq U$, \citet{wang2024testing} call $p$ a `p*-variable', and establish that
\begin{equation}\label{eqn:p*_factor_of_2}\textnormal{For any p*-variable $p$, $\PP{p\leq\alpha}\leq2\alpha$ for all $\alpha$.}\end{equation} An equivalent condition is that $p\geq\EEst{U}{V}$ almost surely for some random variables $U,V$ where $U\sim\textnormal{Unif}[0,1]$. This result is also related to the work of \citet{ruschendorf1982random,meng1994posterior,vovk2020combining}, which establish that \emph{an average of p-values is a valid p-value up to a factor of $2$}. We will use these results throughout our proofs below.

\subsubsection{Comparing binomials with random parameters}
Next,
recall the definition~\eqref{eqn:sample_copies_P_infty} of $p_\infty$, and note that $p_\infty$ is a function of $X$ (and is therefore random). Under this definition, we can see that the indicator variables 
\[\One{T(X_i)\geq T(X)}, \quad i=1,\dots,m\]
are i.i.d.\ draws from the $\textnormal{Bernoulli}(p_\infty)$ distribution, conditional on $X$. We can therefore write
\begin{equation}\label{eqn:pm_Binomial}p_m = \frac{B+1}{m+1} \textnormal{ where } B \mid X \sim \textnormal{Binomial}(m,p_\infty).\end{equation}
With this calculation in place, the following proposition allows us to characterize how properties of the derandomized p-value $p_\infty$ may be inherited by its Monte Carlo version $p_m$.

\begin{proposition}\label{prop:ordering}
    Let $V_0,V_1\in[0,1]$ be random variables, and let
    \[B_i\mid V_i \sim \textnormal{Binomial}(m,V_i),\]
    for each $i=0,1$. Then it holds that
    \[\textnormal{If $V_0\preceq V_1$ then $B_0\preceq B_1$},\]
    where $\preceq$ may denote either $\stleq$, $\cvxleq$, $\dcxleq$, or $\icxleq$.
\end{proposition}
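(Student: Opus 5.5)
The plan is to reduce everything to one fact: for any test function $f$ on $\{0,1,\dots,m\}$, the map
\[g_f(v) = \EE{f(B)} \ \text{ for } B\sim\textnormal{Binomial}(m,v),\quad\text{i.e.}\quad g_f(v)=\sum_{k=0}^m f(k)\binom{m}{k}v^k(1-v)^{m-k},\]
inherits monotonicity and convexity from $f$. Since $\EE{f(B_i)} = \EE{g_f(V_i)}$ by the tower property, it will be enough to show the following. If $f$ is nondecreasing, then $g_f$ is nondecreasing. If $f$ is convex, then $g_f$ is convex. If $f$ is nonincreasing and convex, then $g_f$ is nonincreasing and convex. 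The same holds for nondecreasing and convex. Applying the hypothesis $V_0\preceq V_1$ to the function $g_f$, which lies in the appropriate class, then gives $\EE{f(B_0)}=\EE{g_f(V_0)}\le \EE{g_f(V_1)}=\EE{f(B_1)}$. Two routine points need noting. First, $f$ only matters on $\{0,\dots,m\}$: any $f$ of the given type restricted to the integers, or extended piecewise linearly, stays in the class. Second, all expectations are finite because $B_i$ and $V_i$ are bounded.

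For monotonicity I will use the standard derivative identity for Bernstein polynomials:
\[g_f'(v) = m\sum_{k=0}^{m-1}\big(f(k+1)-f(k)\big)\binom{m-1}{k}v^k(1-v)^{m-1-k}.\]
This is nonnegative when $f$ is nondecreasing and nonpositive when $f$ is nonincreasing. Differentiating once more gives
\[g_f''(v)=m(m-1)\sum_{k=0}^{m-2}\big(f(k+2)-2f(k+1)+f(k)\big)\binom{m-2}{k}v^k(1-v)^{m-2-k},\]
which is nonnegative whenever $f$ is convex on the integers. The case $m=1$ is trivial, since $g_f$ is then affine. Alternatively, both facts can be obtained from a coupling: realize $B$ as $\sum_j \One{U_j\le v}$ with $U_j$ i.i.d.\ uniform. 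This gives monotonicity directly. For convexity, one can use the second-difference representation, or note that $\textnormal{Binomial}(m,v)$ is a convex-order mixture.

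The only step requiring care is the convexity of $g_f$. I expect it to be the main obstacle, but the second-derivative formula above, obtained by applying the first-derivative identity twice, settles it cleanly. With these two properties in hand, the four cases follow as described. For $\stleq$, use the monotonicity of $g_f$. For $\cvxleq$, use the convexity of $g_f$. For $\dcxleq$ and $\icxleq$, use both together.
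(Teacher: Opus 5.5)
Your strategy is the paper's: show that the Bernstein map $f\mapsto g_f$ preserves each function class, and use $\EE{f(B_i)}=\EE{g_f(V_i)}$. Your explicit first- and second-derivative identities are correct, and they give a self-contained replacement for the exponential-family convexity fact that the paper only cites.

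\textbf{Missing step.} One step is missing, and as written it fails; the paper handles it explicitly.
\begin{itemize}
\item The four orders are defined through test functions on all of $\R$.
\item Your derivative formulas only control the signs of $g_f'$ and $g_f''$ on $[0,1]$, because the Bernstein basis polynomials $\binom{m-1}{k}v^k(1-v)^{m-1-k}$ are nonnegative only there.
\item The polynomial $g_f$ itself is generally not in the class on $\R$. For $m=2$ and $f(k)=\min\{k,1\}$ you get $g_f(v)=2v-v^2$, which decreases for $v>1$. For $m=3$ and $f(k)=(k-2)_+$ you get $g_f(v)=v^3$, which is not convex on $(-\infty,0)$.
\end{itemize}
So the step ``applying the hypothesis $V_0\preceq V_1$ to the function $g_f$, which lies in the appropriate class'' is not justified. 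Your remark about restricting or extending $f$ concerns the domain of $f$, not of $g_f$.

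\textbf{The repair.} Define $\bar g$ to equal
\begin{itemize}
\item $g_f$ on $[0,1]$,
\item $g_f(0)+g_f'(0)\,t$ for $t<0$,
\item $g_f(1)+g_f'(1)(t-1)$ for $t>1$,
\end{itemize}
where $g_f'(0)=m(f(1)-f(0))$ and $g_f'(1)=m(f(m)-f(m-1))$. Then $\bar g$ is $C^1$ with $\bar g'(t)=g_f'(\min\{\max\{t,0\},1\})$. Consequently:
\begin{itemize}
\item $\bar g$ is nondecreasing (nonincreasing) on $\R$ whenever $g_f'\ge 0$ ($\le 0$) on $[0,1]$;
\item $\bar g$ is convex on $\R$ whenever $g_f'$ is nondecreasing on $[0,1]$.
\end{itemize}
Since $V_i\in[0,1]$, $\EE{\bar g(V_i)}=\EE{g_f(V_i)}$, so the hypothesis now applies legitimately to $\bar g$. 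This is exactly the extension paragraph in the paper's proof; with it added, your argument is complete.
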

\noindent This result is proved in Appendix~\ref{app:prove_prop:ordering}.
We will also need an additional result:
\begin{lemma}\label{lem:dataprocessing}
    Let $V_0,V_1,B_0,B_1$ be defined as in Proposition~\ref{prop:ordering}. Then it holds that
    \[\dtv(B_0,B_1)\leq\dtv(V_0,V_1).\]
\end{lemma}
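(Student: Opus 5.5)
The plan is a coupling argument. Total variation distance equals the infimum of $\PP{V_0\neq V_1}$ over all couplings of $V_0$ and $V_1$. Assuming the underlying spaces are standard Borel (here they are subsets of $[0,1]$), a maximal coupling exists; if one prefers not to invoke existence, it suffices to work with couplings achieving $\PP{V_0\neq V_1}\leq \dtv(V_0,V_1)+\delta$ and then let $\delta\to 0$.

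Given such a coupling $(V_0,V_1)$, construct $B_0,B_1$ on the same probability space with the correct conditional laws and shared randomness. Draw $U_1,\dots,U_m\iidsim\textnormal{Unif}[0,1]$ independently of $(V_0,V_1)$, and set
\[B_i=\sum_{j=1}^m \One{U_j\leq V_i},\qquad i=0,1.\]
Then $B_i\mid V_i\sim\textnormal{Binomial}(m,V_i)$ for each $i$. Since $U$ is independent of $(V_0,V_1)$, the conditional law of $B_i$ given $V_i$ alone is still this binomial, so the marginal law of each $B_i$ agrees with the one in Proposition~\ref{prop:ordering}. This marginal-law check is the only point needing care, and it is routine.

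On the event $\{V_0=V_1\}$ we have $B_0=B_1$ identically. Hence
\[\dtv(B_0,B_1)\leq\PP{B_0\neq B_1}\leq\PP{V_0\neq V_1}\leq\dtv(V_0,V_1)+\delta.\]
Taking $\delta\to0$ gives the claim.

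An alternative route is the data-processing inequality for total variation under the Markov kernel $v\mapsto\textnormal{Binomial}(m,v)$. For any set $S$, $|\PP{B_0\in S}-\PP{B_1\in S}|=|\EE{g(V_0)}-\EE{g(V_1)}|$, where $g(v)=\Pp{\textnormal{Bin}(m,v)}{S}$ takes values in $[0,1]$. The right-hand side is at most $\dtv(V_0,V_1)$, which gives the same bound. I would present this second version because it is a one-line argument.
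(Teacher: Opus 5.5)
Your proof is correct, and the version you say you would present, the data-processing inequality for the kernel $v\mapsto\textnormal{Binomial}(m,v)$ made explicit through the $[0,1]$-valued function $g(v)$, is exactly the paper's argument, which simply cites data processing. Your shared-uniforms coupling is also valid: it is a self-contained proof of the same inequality in this setting.
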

\begin{proof}[Proof of Lemma~\ref{lem:dataprocessing}]
This result is simply a consequence of the data processing inequality \citep{liese2006divergences}.    
\end{proof}
Finally, we need one more result for the case where the random parameter is uniform:
\begin{lemma}\label{lem:binomial_uniform}
    If $V\sim\textnormal{Unif}[0,1]$ and $B\mid V\sim\textnormal{Binomial}(m,V)$, then 
    \[B \sim \textnormal{Unif}(\{0,1,\dots,m\}).\]
\end{lemma}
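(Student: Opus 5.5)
The plan is to compute $\PP{B=k}$ directly by integrating out $V$ and recognizing a Beta integral. For each $k\in\{0,1,\dots,m\}$,
\[\PP{B=k} = \EE{\PPst{B=k}{V}} = \binom{m}{k}\int_0^1 v^k(1-v)^{m-k}\,dv.\]
The integral is the Beta function value $\mathrm{B}(k+1,m-k+1) = \frac{k!\,(m-k)!}{(m+1)!}$. Multiplying by $\binom{m}{k}=\frac{m!}{k!\,(m-k)!}$ gives $\PP{B=k}=\frac{m!}{(m+1)!}=\frac{1}{m+1}$. This value does not depend on $k$, so $B\sim\textnormal{Unif}(\{0,1,\dots,m\})$.

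The only step that is not pure bookkeeping is the Beta integral identity itself. If one prefers not to cite it, there is a self-contained probabilistic argument. Draw $V,U_1,\dots,U_m\iidsim\textnormal{Unif}[0,1]$ and set $B=\sum_{i=1}^m\One{U_i\leq V}$. Then $B\mid V\sim\textnormal{Binomial}(m,V)$, as required.

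Now $B$ is exactly the rank of $V$ among the $m+1$ i.i.d.\ continuous variables $V,U_1,\dots,U_m$, minus one. Ties occur with probability zero, and by exchangeability $V$ is equally likely to occupy each of the $m+1$ positions. Hence $\PP{B=k}=\frac{1}{m+1}$ for each $k$. This coupling argument also yields the Beta integral as a byproduct, so either route completes the proof. I would present the exchangeability argument as the main proof, since it matches the spirit of the paper.
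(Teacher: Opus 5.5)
Your proposal is correct. The coupling you plan to present as the main proof, with $B=\sum_{i}\One{U_i\leq V}$ identified with the rank of $V$ among $m+1$ i.i.d.\ uniforms, is exactly the paper's argument, and your direct Beta-integral computation is also a valid route to the same result.
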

\begin{proof}[Proof of Lemma~\ref{lem:binomial_uniform}]
    Let $V_1,\dots,V_m\iidsim\textnormal{Unif}[0,1]$ be drawn independently of $V$. Then $B\eqd \sum_{i=1}^m \One{V_i\leq V}$. Moreover, since $V,V_1,\dots,V_m$ are i.i.d.\ and therefore exchangeable, the rank of $V$ among this list is uniform, and so $B \sim \textnormal{Unif}(\{0,1,\dots,m\})$ as desired.
\end{proof}

\subsection{Proof of Theorem~\ref{thm1}}
In this section we prove a stronger claim: we will show that $p_m$ (for finite $m$ and for $m=\infty$) is a p*-variable, i.e., $p_m\dcxleq \textnormal{Unif}[0,1]$. By~\eqref{eqn:p*_factor_of_2}, this will immediately imply that $p_m$ is valid up to a factor of $2$ \citep{wang2024testing}, yielding the result of Theorem~\ref{thm1}.

\paragraph{Infinite case.} 
First we consider the case $m=\infty$. 
Recalling that $P$ is a forward--backward kernel as defined in~\eqref{eqn:fb_condition}, we can take a joint distribution on the triple $(X,X',Y)$ such that:
\begin{itemize}
    \item The marginal distribution of $X$ is $Q_0$;
    \item The conditional distribution of $X'\mid X$ is $P(\cdot\mid X)$;
    \item And, $X,X'$ are conditionally i.i.d.\ given $Y$.
\end{itemize}
We will use the following standard fact: 
\begin{fact}\label{fact:conditional_iid_pval}
    If $X,X'$ are conditionally i.i.d.\ given $Y$, and $T$ is any function, then
    \[p = \PPst{T(X')\geq T(X)}{X,Y}\]
    is a valid p-value conditional on $Y$:
    \[\PPst{p\leq \alpha}{Y}\leq \alpha\textnormal{ for all $\alpha\in[0,1]$, i.e., $p\stgeq \textnormal{Unif}[0,1]$.}\]
    (In words, $p$ is the p-value for test statistic $T(X)$, with respect to the distribution of $X\mid Y$.)
\end{fact}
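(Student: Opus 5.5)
The plan is to reduce to the classical fact that evaluating a survival function at its own random variable gives a superuniform variable, working conditionally on $Y$.

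First I rewrite $p$ in terms of $Y$ and $X$ separately. Since $\cX$ is standard Borel, I can take a regular conditional distribution $Q_y$ of $X$ given $Y=y$. By assumption, $Q_y$ is also the conditional distribution of $X'$ given $Y=y$, and $X'$ is conditionally independent of $X$ given $Y$. Define
\[G_y(t) = \Pp{X'\sim Q_y}{T(X')\geq t}.\]
Conditional independence gives $\PPst{T(X')\geq T(X)}{X,Y} = G_Y(T(X))$ almost surely. This holds because the conditional law of $X'$ given $(X,Y)$ equals its conditional law given $Y$ alone, namely $Q_Y$. A disintegration/Fubini argument on the jointly measurable map $(x,y)\mapsto G_y(T(x))$ makes this rigorous. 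So $p=G_Y(T(X))$, and conditionally on $Y=y$ we have $p=G_y(S)$ with $S=T(X)$, where $S$ has law $T_\#Q_y$. Crucially, $G_y$ is exactly the survival function $t\mapsto\PP{S\geq t}$ of that same law.

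It remains to prove the deterministic claim: if $S$ is any real random variable and $G(t)=\PP{S\geq t}$, then $\PP{G(S)\leq\alpha}\leq\alpha$. The function $G$ is nonincreasing and left-continuous, since $\{S\geq t\}=\bigcap_{s<t}\{S\geq s\}$. Fix $\alpha$ and let $t^*=\inf\{t: G(t)\leq\alpha\}$.
\begin{itemize}
\item If the set $\{t: G(t)\leq\alpha\}$ is empty, the probability is $0$.
\item If $t^*=-\infty$, then $\PP{G(S)\leq \alpha}\leq 1$; but $G(t)\to 1$ as $t\to-\infty$, so $\alpha\geq 1$ and the bound is trivial.
\item Otherwise, by monotonicity the set is either $[t^*,\infty)$ or $(t^*,\infty)$.
  \begin{itemize}
  \item In the first case, $\PP{G(S)\leq\alpha}=\PP{S\geq t^*}=G(t^*)\leq\alpha$.
  \item In the second case, $\PP{G(S)\leq\alpha}=\PP{S>t^*}=\lim_{t\downarrow t^*}G(t)\leq\alpha$, since $G(t)\leq\alpha$ for every $t>t^*$.
  \end{itemize}
\end{itemize}
Applying this to $S\sim T_\#Q_y$ for each $y$ gives $\PPst{p\leq\alpha}{Y}\leq\alpha$ almost surely, which is the claim; equivalently, $p\stgeq\textnormal{Unif}[0,1]$ conditionally on $Y$.

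The only delicate step is the measure-theoretic identification $p=G_Y(T(X))$ almost surely. It requires the regular conditional distributions guaranteed by the standard Borel assumption, together with the fact that conditional independence lets me drop $X$ from the conditioning for $X'$. The survival-function argument itself is routine; the case split on whether the infimum is attained is what handles ties and atoms in the distribution of $T(X)$.
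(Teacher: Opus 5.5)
Your proposal is correct and matches the paper's reasoning. The paper gives no formal proof, only the remark that $p$ is the right-tail p-value of $T(X)$ under the conditional law of $X\mid Y$. You supply the two missing details: the identification $p=G_Y(T(X))$ via regular conditional distributions, and the routine fact that a survival function evaluated at its own random variable is superuniform, with ties and atoms handled by your case split.
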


Now define
\[\tilde{p}_\infty = \PPst{T(X')\geq T(X)}{X,Y},\]
which therefore satisfies 
\begin{equation}\label{eqn:avg_of_pval_1}\tilde{p}_\infty\stgeq \textnormal{Unif}[0,1].\end{equation}
by Fact~\ref{fact:conditional_iid_pval}.

Next, it holds by definition that $p_\infty = \PPst{T(X')\geq T(X)}{X}$, under the joint distribution of $(X,X',Y)$. 
Consequently, by the tower law, we also have
\begin{equation}\label{eqn:avg_of_pval_2}p_\infty = \EEst{\tilde{p}_\infty}{X}.\end{equation}
Combining everything, we have shown that $p_\infty$ can be represented as a conditional expectation of $\tilde{p}_\infty$~\eqref{eqn:avg_of_pval_2}, which implies $p_\infty\cvxleq\tilde{p}_\infty$, and  moreover $\tilde{p}_\infty$ is itself superuniform~\eqref{eqn:avg_of_pval_1}; combining these facts yields $p_\infty\dcxleq \textnormal{Unif}[0,1]$, as desired.

\paragraph{Finite case.} Now we are ready to turn to the finite-$m$ case. 
Define Binomial random variables
\[B_0\mid p_\infty \sim\textnormal{Binomial}(m,p_\infty),\]
and
\[B_1\mid U\sim \textnormal{Binomial}(m,U),\]
where $U\sim\textnormal{Unif}[0,1]$.
Recall from~\eqref{eqn:pm_Binomial} that we can write $p_m = \frac{B_0+1}{m+1}$. By Proposition~\ref{prop:ordering}, we have
\[p_\infty\dcxleq U \ \Longrightarrow \ B_0 \dcxleq B_1\  \Longleftrightarrow \ p_m  \dcxleq \frac{B_1 + 1}{m+1}.\]
But by Lemma~\ref{lem:binomial_uniform}, we have $B_1 \sim\textnormal{Unif}(\{0,\dots,m\})$. Therefore $\frac{B_1 + 1}{m+1}\sim\textnormal{Unif}(\{\frac{1}{m+1},\dots,\frac{m}{m+1},1\})$ and so
\[\frac{B_1 + 1}{m+1}\stgeq \textnormal{Unif}[0,1] \ \Longrightarrow \ \frac{B_1 + 1}{m+1}\dcxleq\textnormal{Unif}[0,1].\]
Combining everything, we have shown that
$p_m  \dcxleq \textnormal{Unif}[0,1]$,
as desired.

\subsection{Proof of Theorem~\ref{thm2}}
\paragraph{Infinite case.} 
First we consider the case $m=\infty$. Let $X,X',X''$ be distributed as in~\eqref{eqn:P_tv_condition}. Then
by construction,
\[p_\infty 
=\PPst{T(X')\geq T(X)}{X},\]
and we also define
\[\tilde{p}_\infty = \PPst{T(X'')\geq T(X)}{X}.\]
We can write
\[\tilde{p}_\infty = \PPst{T(X'')\geq T(X)}{X}= \PPst{T(X'')\geq T(X)}{X,f(X)}.\]
And, $X,X''$ are conditionally i.i.d.\ given $f(X)$, since each has the conditional distribution $Q_0(\cdot\mid f(X))$ by definition. Therefore by Fact~\ref{fact:conditional_iid_pval} (applied with $Y=f(X)$), we have $\tilde{p}_\infty\stgeq\textnormal{Unif}[0,1]$. 

Moreover, 
by construction, conditional on $X$ it holds almost surely that
\[|p_\infty - \tilde{p}_\infty|\leq \dtv\big(X'\mid X, X''\mid X\big),\]
i.e., the total variation distance between the two conditional distributions. Therefore, under the assumption~\eqref{eqn:P_tv_condition}, we have
\[\EE{|p_\infty - \tilde{p}_\infty|} \leq \EE{\dtv\big(X'\mid X, X''\mid X\big)} = \dtv((X,X'),(X,X'')) \leq \epsilon.\]
    Next we need a lemma:\footnote{We remark that this lemma is similar to existing bounds in the literature which could be used to address the case where $U$ is exactly uniform rather than superuniform---e.g., \citet[Proposition 1.2 part 2]{ross2011fundamentals}, which bounds Kolmogorov--Smirnov distance via Wasserstein distance.}
\begin{lemma}\label{lem:dtv_unif}
    Let $U,V\in[0,1]$ be random variables, where $U\stgeq \textnormal{Unif}[0,1]$. Then
    \[\PP{V\leq a}\leq a + \sqrt{2\EE{(U-V)_+}} \textnormal{ for all $a\in[0,1]$}.\]
\end{lemma}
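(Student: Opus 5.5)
We need $\PP{V\le a}\le a+\sqrt{2\EE{(U-V)_+}}$. Write $\delta=\EE{(U-V)_+}$. The plan is to introduce a slack parameter $t\in(0,1]$, split the event $\{V\le a\}$ according to whether $U$ is close to $V$, and then optimize over $t$.

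For any $t>0$ we have the inclusion
\[\{V\le a\}\subseteq\{U\le a+t\}\cup\{V\le a,\ U>a+t\}.\]
On the second event, $U-V>t$.

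The first event is controlled by superuniformity: $\PP{U\le a+t}\le \min(a+t,1)\le a+t$.

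For the second event, we do not use plain Markov, which would give $\delta/t$ and lead to a bound of order $\sqrt{\delta}$ with a worse constant. Instead we integrate. The point is that on $\{V\le a\}$, the quantity $(U-V)_+\ge U-a$ is large whenever $U$ is large, and $U$ is spread out above $a$.

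Concretely, $(U-V)_+\ge (U-a)_+\One{V\le a}$. We bound $\PP{V\le a,\ U>a+t}$ in terms of $\EE{(U-a)_+\One{V\le a}}$ via a layer-cake/rearrangement argument. Let $q=\PP{V\le a}$. Among all events $E$ with $\PP{E}=q$, the expectation $\EE{(U-a)_+\One{E}}$ is minimized when $E$ collects the smallest values of $U$. Since $U\stgeq\textnormal{Unif}[0,1]$, that minimum is at least the value obtained when $U$ is exactly uniform and $E=\{U\le q\}$, namely
\[\int_a^{\max(q,a)}(u-a)\,du=\frac{(q-a)_+^2}{2}.\]
This yields
\[\frac{(q-a)_+^2}{2}\le \EE{(U-a)_+\One{V\le a}}\le \delta,\]
hence $q\le a+\sqrt{2\delta}$, which is exactly the claim. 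This direct route also makes the $t$-splitting unnecessary, so I would present the rearrangement argument as the proof.

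The main obstacle is the rearrangement step under superuniformity rather than exact uniformity. We must show that for any event $E$ with $\PP{E}=q$,
\[\EE{(U-a)_+\One{E}}\ge \int_0^q (u-a)_+\,du.\]
I would prove it by writing
\[\EE{(U-a)_+\One{E}}=\int_0^\infty \PP{E,\ U>a+s}\,ds\]
and using
\[\PP{E,\ U>a+s}\ge \bigl(q-\PP{U\le a+s}\bigr)_+\ge \bigl(q-a-s\bigr)_+.\]
Integrating over $s$ gives $\int_0^\infty (q-a-s)_+\,ds=(q-a)_+^2/2$. This avoids any coupling and handles atoms in $U$ cleanly. Finally, note that if $q\le a$ the claim is trivial.
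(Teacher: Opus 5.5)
Your argument is correct and is essentially the paper's: both proofs pair a bathtub-type lower bound on $\EE{(U-a)_+\One{V\le a}}$, in terms of the probability of the event, with the pointwise bound $(U-V)_+\ge (U-a)_+$ on $\{V\le a\}$. The differences are minor. You apply the rearrangement bound to $\{V\le a\}$ itself rather than to $\{V\le a,\ U>a\}$. Your layer-cake computation also proves that bound directly for superuniform $U$, whereas the paper first reduces to an exactly uniform $U'\le U$ and then asserts the extremal inequality $\EE{U h(U)}\ge ab+b^2/2$ without proof.
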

\noindent
Combining everything, and applying the lemma with $U=\tilde{p}_\infty$ and $V=p_\infty$, we have therefore shown that
\[\PP{p_\infty\leq \alpha}\leq  \alpha + \sqrt{2\epsilon},\]
as desired.

Next suppose instead that the stronger total variation condition~\eqref{eqn:P_tv_condition_strong} holds. Then, conditional on $X$, the total variation distance between the conditional distributions of $X'$ and of $X''$ is at most $\epsilon$; consequently we have
\[|p_\infty-\tilde{p}_\infty|\leq \epsilon\textnormal{ almost surely}.\]
Therefore,
\[\PP{p_\infty\leq \alpha}\leq \PP{\tilde{p}_\infty\leq \alpha+\epsilon}\leq \alpha+\epsilon,\]
where the last step holds since $\tilde{p}_\infty$ is superuniform.

\paragraph{Finite case.} 
Let $\Delta = \sqrt{2\epsilon}$ if we are working under the assumption~\eqref{eqn:P_tv_condition}, or $\Delta=\epsilon$ if we are working under the stronger condition~\eqref{eqn:P_tv_condition_strong}. Without loss of generality we can assume $\Delta\leq1$, otherwise the result is trivial.

Let $U\sim\textnormal{Unif}[0,1]$, and consider the random variable
\[(U-\Delta)_+ = \max\big\{U-\Delta,0\big\},\]
which has cumulative distribution function
\[F(t) = \PP{(U-\Delta)_+\leq t} = \min\{t + \Delta , 1\}, \quad t\in[0,1].\]
The bound proved above for $p_\infty$ therefore establishes that
\[p_\infty\stgeq (U-\Delta)_+.\]
Now let
\[B_0\mid p_\infty \sim\textnormal{Binomial}(m,p_\infty), \ B_1\mid U \sim\textnormal{Binomial}\big(m,(U-\Delta)_+\big), \ B_2\mid U \sim\textnormal{Binomial}(m,U).\]
By Proposition~\ref{prop:ordering}, we therefore have
\[p_\infty\stgeq (U-\Delta)_+ \ \Longrightarrow \ B_0 \stgeq B_1 \ \Longleftrightarrow \ p_m \stgeq \frac{B_1+1}{m+1},\]
since $p_m = \frac{B_0+1}{m+1}$. 
And, by Lemma~\ref{lem:dataprocessing},
\[\dtv(B_1,B_2)\leq \dtv\big((U-\Delta)_+,U\big) = \Delta.\]
Combining everything,
\begin{align*}
    \PP{p_m\leq \alpha}
    &\leq \PP{\frac{B_1+1}{m+1}\leq \alpha}\textnormal{\quad since $p_m\stgeq \frac{B_1+1}{m+1}$}\\
&\leq \PP{\frac{B_2+1}{m+1}\leq \alpha} + \Delta\textnormal{\quad since $\dtv(B_1,B_2)\leq\Delta$}\\
    &\leq \alpha+\Delta\textnormal{\quad since $\frac{B_2+1}{m+1}\stgeq \textnormal{Unif}[0,1]$ by Lemma~\ref{lem:binomial_uniform}.}
\end{align*}

\section{Discussion}

We have studied a simple but subtle question: what can be guaranteed when
Monte Carlo copies are generated from a null-compatible sampling kernel, but the
resulting collection is not jointly exchangeable with the observed data?  The
classical empirical p-value is exactly valid when the observed data and the
Monte Carlo copies are jointly exchangeable.  The examples in the introduction
show that pairwise exchangeability, or stationarity of an MCMC kernel, is not
enough by itself: the naive parallel sampler can have severely inflated type-I
error when the chain has not mixed.  At the same time, the fully exchangeable
hub-and-spoke construction of \citet{besag1989generalized} may introduce substantial
conditional Monte Carlo randomness, because all spokes share the same latent hub.

Our main results identify an intermediate regime.  Under the
forward--backward condition, the derandomized quantity
\[
    p_\infty = \mathbb P\{T(X')\ge T(X)\mid X\}
\]
is a p$^*$-value, and the same p$^*$-value guarantee is inherited by its finite
Monte Carlo version
\[
    p_m=\frac{1+\sum_{i=1}^m {\bf 1}\{T(X_i)\ge T(X)\}}{m+1}.
\]
Consequently, rejecting at threshold $\alpha/2$ yields a valid level-$\alpha$
test without any mixing assumption.  This guarantee is weaker than exact
validity, but it is nonasymptotic in both the number of Monte Carlo samples and
the number of Markov chain steps.  When the sampling kernel has mixed, or more
generally when it is close in total variation to an appropriate conditional null
kernel, the factor-of-$2$ multiplicative guarantee improves to an additive $\alpha+\epsilon$-type bound.
Thus the two main phenomena are complementary: forward--backward structure
protects against arbitrarily poor mixing, while mixing recovers nominal
validity.

\paragraph{Relation to MCMC significance tests.}
Our results are closely related to the generalized Monte Carlo significance
tests of Besag and Clifford.  Their parallel
hub-and-spoke method constructs copies that are jointly exchangeable with the
observed data, and therefore gives an exactly valid p-value for any fixed test
statistic.  More recent work by \citet{Howes2026} reviews these MCMC
significance tests, including serial variants, and presents a unifying
exchangeability perspective.  Our focus is different.  We analyze the naive
parallel sampler, or slight modifications of it, in settings where the observed
data and the copies are typically not jointly exchangeable.  For reversible
chains, an even number of steps can be written as a forward--backward move, so
the naive sampler enjoys a factor-of-$2$ type-I guarantee even if the chain is far
from mixed.  For non-reversible chains, analogous guarantees can be recovered by
using forward--backward compositions such as $\pi^r\circ\pi^{-r}$ or
$(\pi^{-1}\circ\pi)^r$.

\paragraph{Relation to posterior predictive checks.}
The forward--backward condition also clarifies the behavior of Bayesian
posterior predictive p-values.  Under a Bayesian model with prior $\nu$ and
likelihoods $\{f_\theta\}$, drawing
\[
    \theta\mid X,\qquad X'\mid \theta
\]
is exactly a forward--backward construction, with the latent parameter
$\theta$ playing the role of the intermediate variable.  Our Theorem~1 therefore
recovers, and extends to finite Monte Carlo estimates, the familiar factor-of-$2$
validity of posterior predictive p-values \citep{Rubin1984,meng1994posterior}.  This
connects our work to posterior predictive model checking
\citep{GelmanMengStern1996} and to calibration or post-processing approaches
that transform posterior predictive p-values to a uniform scale
\citep{HjortDahlSteinbakk2006}.  Those calibration methods pursue exact
uniformity, often through an additional layer of simulation; our results instead
give direct nonasymptotic guarantees for the original Monte Carlo comparison.

\paragraph{Other methods based on exchangeable copies.}
Many classical and modern testing procedures can be viewed as generating copies
of the observed data under the null.  Permutation tests
\citep{pitman1937significance,fisher1956mathematics} obtain exact validity from invariance of the null
under a group action.  Co-sufficient sampling conditions on a sufficient
statistic in a parametric model and then resamples from the conditional
distribution; approximate co-sufficient sampling extends
this idea by conditioning on approximately sufficient statistics, yielding
approximately exchangeable copies and finite-sample inflation bounds
\citep{BarberJanson2022,ZhuBarber2023,bhaduri2026conditioning}.  In model-X conditional independence
testing, the conditional randomization test resamples covariates from a known or
estimated conditional distribution \citep{CandesFanJansonLv2018}, while the
conditional permutation test uses a non-uniform distribution over permutations to
respect the dependence between the tested covariate and the confounders
\citep{BerrettWangBarberSamworth2020}.  These methods generally aim to create
jointly exchangeable, or approximately jointly exchangeable, copies.  In
contrast, our forward--backward results show that useful type-I guarantees can
remain even when only a weaker, one-copy exchangeability structure is available.

\paragraph{Limitations.}
The price of weakening joint exchangeability is a restriction on the test
statistic.  In a fully exchangeable construction, one may use any statistic that
is computed symmetrically from the observed data and all generated copies.  This
allows, for example, refitting a model or reselecting tuning parameters after
each permutation or resampling step.  Our results instead apply to a
prespecified statistic $T:\mathcal X\to\mathbb R$ evaluated separately on the
observed data and on each copy.  This distinction is analogous to the difference
between full conformal and split conformal inference: full exchangeability
permits symmetric retraining across all candidate samples, whereas
split-conformal or training-conditional guarantees require the score function to
be fixed before the calibration step.

\paragraph{Approximate null compatibility and bootstrap-type procedures.}
Another important extension is to sampling mechanisms that are not exactly
compatible with the null.  Parametric bootstrap and plug-in resampling
procedures often replace an unknown null distribution by an estimated one; these
procedures are typically valid only asymptotically and can fail in finite
samples.  
The total-variation bounds in Theorem~2 suggest a route to finite-sample
robustness guarantees for plug-in and bootstrap samplers, whenever the
implemented sampler can be compared sharply to an ideal null-compatible sampler.
Finding ways to obtain sharp characterizations of the accuracy of common
bootstrap and simulation-based inference procedures remains an open problem.

\paragraph{Serial sampling and adaptive computation.}
We have focused on parallel Monte Carlo copies sampled conditionally
independently from a kernel $P(\cdot\mid X)$.  Besag and Clifford also proposed
serial MCMC significance tests, where the copies arise along a single Markov
chain trajectory.  Serial sampling may be computationally preferable, and may
reduce or increase Monte Carlo variability depending on the chain and statistic.
An interesting open question is whether analogues of our forward--backward and
total-variation guarantees can be proved for serial samplers, possibly under
weaker dependence conditions than conditional independence of the copies.
A second open question is how far the prespecified-statistic requirement can be
relaxed.  Allowing data-adaptive choices of $T$, while preserving a
nonasymptotic guarantee, would substantially broaden the practical scope of the
method.

Overall, the message is that exact joint exchangeability is sufficient but not
necessary for useful Monte Carlo inference.  Forward--backward structure,
approximate mixing, and finite exchangeability each provide different ways to
control the price paid for using computationally convenient samples that are not
fully exchangeable with the data.

\section*{Acknowledgements} R.F.B. was partially supported by the Office of Naval Research via grant N00014-24-1-2544.
 The authors thank 
Rohan Hore, Art Owen, and Yuling Yao for helpful discussions.

\bibliographystyle{plainnat}
\bibliography{bib}

\appendix

\section{Additional proofs}

\subsection{Proof of Proposition~\ref{prop:ordering}}\label{app:prove_prop:ordering}
First, we restate this part of the proposition in a more general form. 
Given a class $\Fcal$ of functions $f:\R\to\R$, define $A\Fleq B$ if $\EE{f(A)}\leq \EE{f(B)}$ for all $f\in\Fcal$ for which these expected values are defined.
Then the four orderings $\stleq$, $\cvxleq$, $\dcxleq$, and $\icxleq$ can be obtained by choosing $\Fcal$ as, respectively, the set of all nondecreasing functions,  the set of all convex functions, the set of all nonincreasing and convex functions, or the set of all nondecreasing and convex functions.

\begin{proposition}\label{prop:ordering_F}
    Let $V_0,V_1\in[0,1]$ be random variables, and let
    \[B_i\mid V_i \sim \textnormal{Binomial}(m,V_i),\]
    for each $i=0,1$. Let $\Fcal$ be a set of functions $f:\R\to\R$ such that
    \begin{equation}\label{eqn:Fcal_condition}
        \textnormal{For all $f\in\Fcal$, there exists a $g\in\Fcal$ with $g(p)=\EE{f(\textnormal{Binomial}(m,p))}$ for all $p\in[0,1]$.}
    \end{equation}
    Then it holds that
    \[\textnormal{If $V_0\Fleq V_1$ then $B_0\Fleq B_1$}.\]
\end{proposition}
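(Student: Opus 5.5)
The argument is a direct conditioning step: condition on the random parameter $V_i$, average $f$ over the Binomial draw, and use the closure condition~\eqref{eqn:Fcal_condition} to move the comparison from $(B_0,B_1)$ to $(V_0,V_1)$.

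\textbf{Step 1: reduce to a function of $V_i$.} Fix $f\in\Fcal$ such that $\EE{f(B_0)}$ and $\EE{f(B_1)}$ are defined. Since $B_i$ takes values only in the finite set $\{0,\dots,m\}$, $f(B_i)$ is bounded, so all expectations are finite and conditioning is unproblematic. By the tower law and the definition $B_i\mid V_i\sim\textnormal{Binomial}(m,V_i)$,
\[\EE{f(B_i)} = \EE{\EEst{f(B_i)}{V_i}} = \EE{h(V_i)},\qquad h(p) := \EE{f(\textnormal{Binomial}(m,p))}.\]
Here $h(p)=\sum_{k=0}^m \binom{m}{k}p^k(1-p)^{m-k}f(k)$ is a polynomial on $[0,1]$, hence bounded and measurable.

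\textbf{Step 2: invoke the closure condition.} By~\eqref{eqn:Fcal_condition} there is $g\in\Fcal$ with $g(p)=h(p)$ for all $p\in[0,1]$. Since $V_i\in[0,1]$, we have $g(V_i)=h(V_i)$ almost surely. Moreover $g$ is bounded on $[0,1]$, so $\EE{g(V_0)}$ and $\EE{g(V_1)}$ are defined.

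\textbf{Step 3: apply the hypothesis.} The assumption $V_0\Fleq V_1$ gives $\EE{g(V_0)}\leq\EE{g(V_1)}$. Chaining the equalities,
\[\EE{f(B_0)}=\EE{g(V_0)}\leq \EE{g(V_1)}=\EE{f(B_1)}.\]
Since $f\in\Fcal$ was arbitrary, $B_0\Fleq B_1$.

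There is no real obstacle in this general statement; the substance lies in verifying~\eqref{eqn:Fcal_condition} for each of the four classes, so that Proposition~\ref{prop:ordering} follows. The only subtlety is that $g$ must be defined on all of $\R$. Since only $g$'s values on $[0,1]$ matter, we have the freedom to extend it suitably, and the four checks are as follows.

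\emph{Nondecreasing $f$.} Monotonicity of $p\mapsto \EE{f(\textnormal{Binomial}(m,p))}$ follows from the coupling $\sum_{j\le m}\One{U_j\le p}$ with $U_j\iidsim\textnormal{Unif}[0,1]$.

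\emph{Convex $f$.} Convexity of $h$ in $p$ follows from its second derivative $m(m-1)\sum_{k}\binom{m-2}{k}p^k(1-p)^{m-2-k}\,\Delta^2 f(k)$, whose second differences $\Delta^2 f(k)$ are nonnegative.

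\emph{Nonincreasing convex and nondecreasing convex $f$.} These combine the previous two checks. For the extension, take $g$ to be the tangent-line extension of $h$ beyond $[0,1]$, which preserves both convexity and monotonicity.

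This verification is the only step needing care, chiefly the convexity of the Bernstein polynomial and the extension of $g$ off $[0,1]$.
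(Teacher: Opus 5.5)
Your proof is correct and follows the paper's argument: condition on $V_i$ to get $\EE{f(B_i)}=\EE{g(V_i)}$ via the closure condition, then apply $V_0\Fleq V_1$ to $g$. Your closing verification of the four function classes also matches the paper's proof of Proposition~\ref{prop:ordering}, including the tangent-line extension. The one difference there is that you prove convexity of $p\mapsto\EE{f(\textnormal{Binomial}(m,p))}$ directly from the Bernstein second-derivative formula, where the paper cites the exponential-family fact.
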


\begin{proof}[Proof of Proposition~\ref{prop:ordering}]
In order to obtain this result as corollary of Proposition~\ref{prop:ordering_F}, we only need to verify that the condition~\eqref{eqn:Fcal_condition} holds for each of the four relevant choices of $\Fcal$. 

Fix any $f$ and define $g(p) =\EE{f(\textnormal{Binomial}(m,p))}$.
Then it holds 
    \begin{equation}\label{eqn:f_g_1}\textnormal{If $f$ is nondecreasing (or, nonincreasing) then $g$ is nondecreasing (or, nonincreasing),}\end{equation}
    trivially since $\textnormal{Binomial}(m,p)\stleq \textnormal{Binomial}(m,p')$ for $p\leq p'$, and moreover
    \begin{equation}\label{eqn:f_g_2}\textnormal{If $f$ is convex then $g$ is convex},\end{equation}
    which is a classical fact about exponential families \citep{shaked1980mixtures,schweder1982dispersion}.

However, we have not yet completed the proof, since at the moment $g$ is defined on the domain $p\in[0,1]$. We now show that $g$ can be extended to a function on $\R$ while preserving the above properties.

First we calculate a one-sided derivative for $g$ at $p=0$,
\begin{align*}
    g'(0)
    &=\lim_{p\searrow 0}\frac{g(p)-g(0)}{p} \\
    &=\lim_{p\searrow 0}\frac{\EE{f(\textnormal{Binomial}(m,p))} - \EE{f(\textnormal{Binomial}(m,0))}}{p}\\
    &=\lim_{p\searrow 0} \frac{\sum_{i=0}^m \binom{m}{i}p^i(1-p)^{m-i} \big(f(i) - f(0)\big)}{p} =m\big(f(1)-f(0)\big).
\end{align*}
And similarly at $p=1$,
\[g'(1) = \lim_{p\nearrow 1}\frac{g(1)-g(p)}{1-p} = m\big(f(m)-f(m-1)\big).\]
Since $g$ is differentiable on $[0,1]$, we can therefore define an extension to a function $g:\R\to\R$ (by taking $g'(t) = g'(0)$ for $t<0$, and $g'(t)=g'(1)$ for $t>1$) such that
\begin{multline*}\textnormal{If $g$ is nondecreasing (or nonincreasing) on $[0,1]$,}\\\textnormal{then $g$ is nondecreasing (or nonincreasing) on $\R$},\end{multline*}
and such that
\[\textnormal{If $g$ is convex on $[0,1]$, then $g$ is convex on $\R$}.\]
Therefore, the claims~\eqref{eqn:f_g_1} and~\eqref{eqn:f_g_2} hold for the extended function $g:\R\to\R$. Consequently, 
\[f\in\Fcal \ \Longrightarrow \ g\in\Fcal\]
for each of the relevant choices of $\Fcal$ (i.e., nondecreasing functions; convex functions; nonincreasing convex functions; nondecreasing convex functions), which completes the proof.
\end{proof}

\begin{proof}[Proof of Proposition~\ref{prop:ordering_F}]
    Fixing any $f\in\Fcal$, we need to show that $\EE{f(B_0)}\leq \EE{f(B_1)}$ (assuming  these expected values exist). By assumption~\eqref{eqn:Fcal_condition}, we can fix some $g\in\Fcal$ with
    \[g(p) = \EE{f(\textnormal{Binomial}(m,p))}\]
    for all $p\in[0,1]$. And, for each $i=0,1$, we have
    \[g(V_i) = \EEst{f(\textnormal{Binomial}(m,V_i))}{V_i} = \EEst{f(B_i)}{V_i},\]
    and so
    \(\EE{g(V_i)} =  \EE{f(B_i)}.\)
    Therefore, 
    \[\EE{f(B_0)} = \EE{g(V_0)}\leq \EE{g(V_1)} = \EE{f(B_1)},\]
    where the inequality holds since $g\in\Fcal$ and $V_0\Fleq V_1$.
\end{proof}

\subsection{Proof of Lemma~\ref{lem:dtv_unif}}

    Without loss of generality, we can assume that $U\sim\textnormal{Unif}[0,1]$, since this is the most challenging case (otherwise we may replace $U$ with a uniform, $U'\sim\textnormal{Unif}[0,1]$ with $U'\leq U$ almost surely, so that $(U'-V)_+\leq (U-V)_+$ almost surely).

    First, let $h:[0,1]\to[0,1]$ be any function, such that $h(u)=0$ if $u\leq a$. Let $b=\EE{h(U)}$. It then holds that
    \[\EE{U\cdot h(U)} \geq ab + b^2/2,\]
    with the lower bound attained by choosing $h(u)=\one{a<u\leq a+b}$.

    Now we choose the function as
    \[h(U) = \PPst{V\leq a}{U} \cdot \One{U>a}.\]
    Again let $b=\EE{h(U)} = \PP{V\leq a, U>a}$. Then
    \begin{align*}
        ab + b^2/2
        &\leq \EE{U\cdot h(U)} =  \EE{U \cdot \One{V\leq a, U>a}}\\
        &\leq   \EE{\big(a + (U-V)_+\big) \cdot \One{V\leq a, U>a}}\\
        &\leq a \cdot \PP{V\leq a, U>a} + \EE{(U-V)_+} = ab + \EE{(U-V)_+}.
    \end{align*}
    This proves that $ b\leq \sqrt{2\EE{(U-V)_+}}$.
    Therefore,
    \[\PP{V\leq a}\leq \PP{U\leq a} + \PP{V\leq a, U>a} \leq a +  \sqrt{2\EE{(U-V)_+}}.\]

\subsection{Proofs and additional details for balanced permutation tests}\label{app:balanced_perm}

\subsubsection{Proof of Proposition~\ref{prop:balanced_perm}}

First we rewrite the problem in our general notation. Let $X=(Y_1,\dots,Y_n)$ denote the data (note that $A_1,\dots,A_n$ are fixed and so we treat these values as constants).
    
    Define a probability kernel
    \[P(\cdot\mid X) = \frac{1}{|\cS_{n,\mathrm{bal}}|}\sum_{\sigma\in \cS_{n,\mathrm{bal}}}\delta_{X_\sigma}\textnormal{ where }X_\sigma = (Y_{\sigma(1)},\dots,Y_{\sigma(n)}).\]
    In other words, we are sampling a permutation $\sigma$ uniformly at random from the set of balanced permutations $\cS_{n,\mathrm{bal}}$, and then returning the permuted version of the data.
    Under this notation, we can see that the quantities $p_m$~\eqref{eqn:sample_copies_P} and $p_\infty$~\eqref{eqn:sample_copies_P_infty} defined for our general framework coincide exactly with the quantities $p_m$ and $p_\infty$ constructed in Section~\ref{sec:balanced_perm} for the balanced permutation test.
    
    Next, define also a probability kernel
    \[P^*(\cdot\mid X) = \left(1-\frac{1}{\mathrm{Bal}_n}\right) \cdot P(\cdot \mid X)+\frac{1}{\mathrm{Bal}_n} \cdot \delta_X.\]
    Below, we will verify that $P^*$ is a forward--backward probability kernel, i.e., it satisfies the condition~\eqref{eqn:fb_condition}. Consequently, if we define
    \[p^*_\infty = \Ppst{X'\sim P^*(\cdot\mid X)}{T(X')\geq T(X)}{X},\]
    then, as in Theorem~\ref{thm1}, we have $p^*_\infty\dcxleq \textnormal{Unif}[0,1]$. 
    But by construction, we have
    \[p^*_\infty =\left(1-\frac{1}{\mathrm{Bal}_n}\right) \cdot p_\infty+\frac{1}{\mathrm{Bal}_n}.\]
    In particular, we then have
    \[\PP{p_\infty\leq\alpha} = \PP{p^*_\infty\leq \alpha \left(1- \frac{1}{\mathrm{Bal}_n}\right) +\frac{1}{\mathrm{Bal}_n}}\leq 2\alpha\left(1-\frac{1}{\mathrm{Bal}_n}\right)+\frac{2}{\mathrm{Bal}_n}\]
    by~\eqref{eqn:p*_factor_of_2}.
    
    Next, for the finite case, we need another result (proved in Appendix~\ref{app:proof_prop:binomial_concave}):
    \begin{proposition}\label{prop:binomial_concave}
       Let $V\in[0,1]$ be a random variable whose CDF $F(t) = \PP{V\leq t}$ is a concave function on $t\in[0,1]$. Let $B\mid V\sim\textnormal{Binom}(m,V)$. Then
       \[\frac{B+1}{m+1}\stgeq V.\]
    \end{proposition}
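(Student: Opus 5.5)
We need to show: if $V$ has concave CDF $F$ on $[0,1]$, and $B\mid V\sim\textnormal{Binom}(m,V)$, then $\frac{B+1}{m+1}\stgeq V$. Equivalently, for each $k\in\{0,\dots,m\}$,
\[
\PP{B\leq k-1}\;\leq\;\PP{V\leq \tfrac{k}{m+1}},
\]
since $\frac{B+1}{m+1}\leq t$ for $t\in[\frac{k}{m+1},\frac{k+1}{m+1})$ means $B\leq k-1$, and $F$ is nondecreasing. So it suffices to check the inequality $\PP{B\leq k-1}\leq F(\frac{k}{m+1})$ for $k=1,\dots,m$; the case $k=0$ is trivial, and for $t\geq 1$ both sides equal one.

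The approach is a mixture/extreme-point argument. A concave CDF on $[0,1]$ (with $F(0)\geq 0$, possibly an atom at $0$) corresponds to a distribution whose density is nonincreasing. Any such law is a mixture of the laws $\textnormal{Unif}[0,c]$ for $c\in(0,1]$, together with a point mass $\delta_0$. Concretely, $F$ is a mixture over $c$ of the functions $F_c(t)=\min\{t/c,1\}$, plus the constant $1$.

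Both sides of the target inequality are linear in the law of $V$. The left side is $\EE{\PPst{\textnormal{Binom}(m,V)\leq k-1}{V}}$ and the right side is $\EE{\One{V\leq k/(m+1)}}$. Hence it suffices to verify the inequality for each extreme law. For $\delta_0$ it is trivial, since the right side equals $1$.

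For $V\sim\textnormal{Unif}[0,c]$, write $g(p)=\PP{\textnormal{Binom}(m,p)\leq k-1}$, which is decreasing in $p$. We need
\[
\frac1c\int_0^c g(p)\,dp\;\leq\;\min\Big\{\frac{k}{(m+1)c},\,1\Big\}.
\]
The bound by $1$ is immediate. For the other bound, $\int_0^c g\leq\int_0^1 g=\PP{B_U\leq k-1}$ with $U$ uniform. By Lemma~\ref{lem:binomial_uniform}, this equals $\frac{k}{m+1}$. Dividing by $c$ gives the claim.

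The main obstacle is making the mixture representation rigorous. One needs that concavity of $F$ on $[0,1]$ gives a nonincreasing right-derivative $f$, so that
\[
F(t)=F(0)+\int_0^t f,
\]
with the possible atom at $0$ accounted for. Then $f$ can be written as $\int \frac1c\One{\cdot\leq c}\,d\mu(c)$ via the layer-cake formula, where $\mu$ is built from the decrease of $f$ (including $f(1^-)$ mass at $c=1$). Alternatively, one can avoid mixtures altogether. Integration by parts gives $\PP{B\leq k-1}=g(1)+\int_0^1 F(p)\,(-g'(p))\,dp$, and since $-g'\geq0$ integrates to $1-g(1)$, Jensen applied to the concave $F$ yields a bound of $F$ evaluated at a mean. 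Here $\int_0^1 p\,(-g'(p))\,dp=\int_0^1 g - g(1)\cdot 1$, and $g(1)=0$ for $k\leq m$, so the mean is $\frac{k}{m+1}$. This gives the bound $F(\frac{k}{m+1})$ directly. I would present this Jensen version as the main proof.
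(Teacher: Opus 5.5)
Your proof is correct, and your main (Jensen) argument takes a different route from the paper's.

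The mixture sketch you give first is essentially the paper's argument. The paper treats $V\sim\textnormal{Unif}[0,a]$ by viewing it as a uniform parameter conditioned on $\{V\le a\}$ and applying Lemma~\ref{lem:binomial_uniform}; this is the same computation as your $\frac1c\int_0^c g\le\frac1c\int_0^1 g$. The paper then gets the decomposition $V\eqd AU$ with $A\independent U$ by citing Khintchine's theorem, which is exactly the step you identified as the obstacle.

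The Jensen argument you choose as your main proof closes without gaps:
\begin{itemize}
\item For $1\le k\le m$ you have $g(0)=1$ and $g(1)=0$. So $-g'$ is a probability density on $[0,1]$, namely the $\textnormal{Beta}(k,m+1-k)$ density of the $k$th order statistic of $m$ uniforms.
\item Fubini gives $\PP{B\le k-1}=\int_0^1 F(p)\,(-g'(p))\,dp$.
\item The mean of this density is $\int_0^1 g=\frac{k}{m+1}$ by Lemma~\ref{lem:binomial_uniform}.
\item Since this mean is interior, a supporting line of $F$ at $\frac{k}{m+1}$ dominates $F$ on all of $[0,1]$, including a possible atom at $0$. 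So Jensen applies.
\end{itemize}

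This gives a self-contained, elementary proof with no representation theorem. It also exposes the mechanism: each value of the CDF of $\frac{B+1}{m+1}$ is an average of $F$ against a Beta law centered at the grid point $\frac{k}{m+1}$. It even proves slightly more than stated, since you only use that $F$ has an affine majorant touching it at each $\frac{k}{m+1}$. The paper's route instead keeps the argument in the same ``reduce to a uniform parameter, where $p_m$ is superuniform'' form used elsewhere in the paper, at the price of invoking Khintchine's theorem.
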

    
    Now let $V\in[0,1]$ be a random variable with CDF $F(t) = \min\left\{2t\left(1-\frac{1}{\mathrm{Bal}_n}\right)+\frac{2}{\mathrm{Bal}_n} , 1\right\}$ on $t\in[0,1]$, which is concave. Then by comparing CDFs, we see that $p_\infty\stgeq V$. We also have
    \[p_m \stgeq \frac{B+1}{m+1},\]
    where $B\mid V\sim\textnormal{Binom}(m,V)$, by Proposition~\ref{prop:ordering}. Consequently, applying Proposition~\ref{prop:binomial_concave},
    \[p_m\stgeq V\]
    and so
    \[\PP{p_m\leq \alpha}\leq \PP{V\leq \alpha} = F(\alpha) \leq 2\alpha\left(1-\frac{1}{\mathrm{Bal}_n}\right)+\frac{2}{\mathrm{Bal}_n},\]
    as desired.

    To complete the proof, we return to the question of verifying that $P^*$ is a forward--backward kernel. Define
    \[\tilde\cS_{n,\mathrm{bal}} = \left\{(\sigma_1,\dots,\sigma_{\mathrm{Bal}_n}) : \textnormal{ $\sigma_i\circ\sigma_j^{-1}\in\cS_{n,\mathrm{bal}}$ for all $i\neq j\in\{1,\dots,\mathrm{Bal}_n\}$}\right\}.\]
    By definition of $\mathrm{Bal}_n$, this set is nonempty. Moreover, by construction it satisfies certain symmetry conditions. First, if $(\sigma_1,\dots,\sigma_{\mathrm{Bal}_n})\in\tilde\cS_{n,\mathrm{bal}}$ then any permutation of this vector $(\sigma_{\tau(1)},\dots,\sigma_{\tau(\mathrm{Bal}_n)})$ (where $\tau$ is a permutation of $\{1,\dots,\mathrm{Bal}_n\}$) is also in $\tilde\cS_{n,\mathrm{bal}}$, and consequently
    \[\textnormal{If $(\sigma_1,\dots,\sigma_{\mathrm{Bal}_n})\sim\textnormal{Unif}(\tilde\cS_{n,\mathrm{bal}})$ then $\sigma_1,\dots,\sigma_{\mathrm{Bal}_n}$ are exchangeable.}\]
    Second,
    \[\textnormal{If $(\sigma_1,\dots,\sigma_{\mathrm{Bal}_n})\sim\textnormal{Unif}(\tilde\cS_{n,\mathrm{bal}})$ then $\sigma_i\circ\sigma_j^{-1} \sim\textnormal{Unif}(\cS_{n,\mathrm{bal}})$},\]
    for each $i\neq j\in\{1,\dots,\mathrm{Bal}_n\}$.
    Therefore, the probability kernel $P$ can equivalently be written as 
    \[P(\cdot\mid X) = \frac{1}{|\tilde\cS_{n,\mathrm{bal}}|}\sum_{(\sigma_1,\dots,\sigma_{\mathrm{Bal}_n})\in\tilde\cS_{n,\mathrm{bal}}} \frac{1}{\mathrm{Bal}_n(\mathrm{Bal}_n-1)}\sum_{i\neq j\in\{1,\dots,\mathrm{Bal}_n\}}\delta_{X_{\sigma_i\circ\sigma_j^{-1}}}.\]
    By definition of $P^*$, then,
    \[P^*(\cdot\mid X) = \frac{1}{|\tilde\cS_{n,\mathrm{bal}}|}\sum_{(\sigma_1,\dots,\sigma_{\mathrm{Bal}_n})\in\tilde\cS_{n,\mathrm{bal}}} \frac{1}{\mathrm{Bal}_n^2}\sum_{i,j\in\{1,\dots,\mathrm{Bal}_n\}}\delta_{X_{\sigma_i\circ\sigma_j^{-1}}}.\]
    Now define a random variable $Z = (\sigma_1,\dots,\sigma_{\mathrm{Bal}_n})$, and let $i,j\in\{1,\dots,\mathrm{Bal}_n\}$ be sampled uniformly at random (with replacement). Since $X$ is exchangeable (and so $X\eqd X_{\sigma_i^{-1}}$, even after conditioning on $Z,i,j$),
    \[\big(X,X_{\sigma_i\circ\sigma_j^{-1}},Z\big) \eqd \big(X_{\sigma_i^{-1}},(X_{\sigma_i^{-1}})_{\sigma_i\circ\sigma_j^{-1}},Z\big)=\big(X_{\sigma_i^{-1}},X_{\sigma_j^{-1}},Z\big).\]
    And clearly, $X_{\sigma_i^{-1}},X_{\sigma_j^{-1}}$ are conditionally i.i.d.\ given $(Z,X)$, since the indices $i,j$ are sampled uniformly with replacement. Therefore, this verifies that $P^*$ satisfies the forward--backward condition~\eqref{eqn:fb_condition}.
    
\subsubsection{Proof of Proposition~\ref{prop:binomial_concave}}\label{app:proof_prop:binomial_concave}
First we prove the result for a special case. Let $V\sim\textnormal{Unif}[0,a]$ where $a\in[0,1]$. 
Then, for $t\in[0,1]$, if $a>0$ we have
\begin{align*}
    \PP{\frac{B+1}{m+1}\leq t}
    &=\Pp{V\sim\textnormal{Unif}[0,a]}{\frac{B+1}{m+1}\leq t}\\
    &=\Ppst{V\sim\textnormal{Unif}[0,1]}{\frac{B+1}{m+1}\leq t}{V\leq a}\\   
    &\leq \frac{\Pp{V\sim\textnormal{Unif}[0,1]}{\frac{B+1}{m+1}\leq t}}{\Pp{V\sim\textnormal{Unif}[0,1]}{V\leq a}} \leq \frac{t}{a},
\end{align*}
where the last step holds since, for $V\sim\textnormal{Unif}[0,1]$, the random variable $\frac{B+1}{m+1}$ is superuniform, by Lemma~\ref{lem:binomial_uniform}. Consequently $\frac{B+1}{m+1}\stgeq V$. On the other hand if $a=0$, then $V=0$ almost surely and so $\frac{B+1}{m+1}\stgeq V$ holds trivially.

Next we move to the general case.
Since the CDF $F$ is a concave function, we can write
\[V \eqd AU\]
where $U\sim\textnormal{Unif}[0,1]$ and $A$ is a random variable with $A\independent U$, by Khintchine's theorem \citep[Chapter V.9]{feller1991introduction}. Since $V$ takes values in $[0,1]$ we must have $A\in[0,1]$ almost surely as well.
Then, by the work above,
\[\left(\frac{B+1}{m+1}\mid A=a\right) \stgeq a\cdot U\]
for all $a\in[0,1]$. Therefore,
\begin{multline*}
    \PP{\frac{B+1}{m+1}\leq t}
    =\EE{\PPst{\frac{B+1}{m+1}\leq t}{A}}\\
    \leq \EE{\PPst{A\cdot U\leq t}{A}}
    =\PP{A\cdot U\leq t} = \PP{V\leq t},
\end{multline*}
proving that $\frac{B+1}{m+1}\stgeq V$.

\subsubsection{Additional calculations: computing $\mathrm{Bal}_n$}\label{app:hadamard}
In Section~\ref{sec:balanced_perm}, to help interpret the result of Proposition~\ref{prop:balanced_perm}, we stated that we expect $\mathrm{Bal}_n\asymp n$. Here we justify this claim.

The following lemma determines the scale of $\mathrm{Bal}_n$. For background, we recall that a \emph{Hadamard matrix} in dimension $n$ is a matrix $H_n\in\{-1,1\}^{n\times n}$ such that its columns are pairwise orthogonal. 
\begin{lemma}\label{lem:hadamard}
    Let $n$ be a positive integer that is a multiple of $4$. It holds that
    \[\mathrm{Bal}_n \leq n-1,\]
    with equality if and only if there exists a Hadamard matrix of dimension $n$.
\end{lemma}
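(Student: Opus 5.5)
Our approach is to encode each permutation as a $\pm1$ vector and reduce the balance condition to an orthogonality condition, after which both claims become statements about Gram matrices of $\pm1$ vectors.

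\textbf{Encoding.} Let $a\in\{-1,1\}^n$ be given by $a_i = 2A_i-1$, so that $\sum_i a_i=0$. For a permutation $\sigma$, set $v_\sigma = a\circ\sigma^{-1}$, that is, $(v_\sigma)_k = a_{\sigma^{-1}(k)}$. Each $v_\sigma$ is a $\pm1$ vector with $n/2$ entries of each sign.

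\textbf{Balance as orthogonality.} A permutation $\tau$ is balanced if and only if $\sum_i a_i a_{\tau(i)} = 0$. Indeed, the counts $N_{ab}=\sum_i\One{A_i=a,A_{\tau(i)}=b}$ have row and column sums equal to $n/2$, so they are determined by $N_{11}$. The inner product equals $N_{11}+N_{00}-N_{10}-N_{01} = 4N_{11}-n$, which vanishes exactly when $N_{11}=n/4$.

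Now take $\tau=\sigma_i\circ\sigma_j^{-1}$ and substitute $k=\sigma_j^{-1}(l)$:
\[\sum_k a_k a_{\sigma_i(\sigma_j^{-1}(k))} = \sum_l a_{\sigma_j(l)}a_{\sigma_i(l)}.\]
It is therefore cleaner to use $w_\sigma = a\circ\sigma$, with $(w_\sigma)_l=a_{\sigma(l)}$. The condition in the definition of $\mathrm{Bal}_n$ becomes: $w_{\sigma_1},\dots,w_{\sigma_k}$ are pairwise orthogonal. Each $w_\sigma$ is also orthogonal to $\mathbf{1}$, since $\sum_i a_i=0$.

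Conversely, every $\pm1$ vector with $n/2$ entries of each sign has the form $w_\sigma$ for some $\sigma$. Hence $\mathrm{Bal}_n$ equals the maximum number of pairwise orthogonal balanced $\pm1$ vectors in $\R^n$.

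\textbf{Upper bound.} Such vectors are nonzero, mutually orthogonal, and all lie in $\mathbf{1}^\perp$, which has dimension $n-1$. So $\mathrm{Bal}_n\le n-1$.

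\textbf{Equality case, forward direction.} Suppose $\mathrm{Bal}_n=n-1$. Append $\mathbf{1}$ to the $n-1$ vectors to obtain $n$ pairwise orthogonal $\pm1$ vectors. Taking these as columns gives a Hadamard matrix of dimension $n$.

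\textbf{Equality case, reverse direction.} Suppose a Hadamard matrix $H$ of dimension $n$ exists. Normalize it by multiplying rows by $\pm1$ so that the first column is $\mathbf{1}$; sign changes of rows preserve orthogonality of the columns. The remaining $n-1$ columns are $\pm1$ vectors orthogonal to $\mathbf{1}$, hence balanced, and pairwise orthogonal. This gives $\mathrm{Bal}_n\ge n-1$, and combined with the upper bound, $\mathrm{Bal}_n=n-1$.

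\textbf{Main obstacle.} The step needing the most care is the index bookkeeping that turns $\sigma_i\circ\sigma_j^{-1}\in\cS_{n,\mathrm{bal}}$ into orthogonality of $w_{\sigma_i}$ and $w_{\sigma_j}$, together with the surjectivity of $\sigma\mapsto w_\sigma$ onto balanced sign vectors. Once these are in place, both the bound and the equality case reduce to the elementary linear algebra above.
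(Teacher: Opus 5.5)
Your proof is correct and follows essentially the same route as the paper: encode $\sigma$ as the sign vector $(a_{\sigma(l)})_l$, convert balance of $\sigma_i\circ\sigma_j^{-1}$ into orthogonality, get $n-1$ from orthogonality to $\mathbf{1}$, and for the converse normalize a Hadamard matrix so its first column is $\mathbf{1}$. Your identity $\sum_i a_i a_{\tau(i)} = 4N_{11}-n$ usefully justifies the direction ``orthogonal implies balanced'', which the paper's converse uses without comment. The only slip is that the substitution should read $l=\sigma_j^{-1}(k)$, i.e.\ $k=\sigma_j(l)$, rather than $k=\sigma_j^{-1}(l)$; your displayed identity is nonetheless right.
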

\noindent A Hadamard matrix $H_n$ is conjectured to exist for any $n$ that is a multiple of $4$ (the \emph{Hadamard conjecture}), and is known to exist for certain special cases, e.g., any $n$ that is a power of $2$.  (See \citet{horadam2012hadamard} for additional background.) In other words, since we are assuming throughout this example that $n$ is a multiple of $4$ we expect that $\mathrm{Bal}_n\asymp n$.

\begin{proof}[Proof of Lemma~\ref{lem:hadamard}]
Let $\sigma_1,\dots,\sigma_k$ be a collection of permutations where $\sigma_i\circ\sigma_j^{-1}$ is balanced for each $i\neq j$. For each $i=1,\dots,k$, define a vector $v_i\in\{\pm1\}^n$ as
\[(v_i)_j = \begin{cases} +1, & A_{\sigma_i(j)}=1,\\ -1, & A_{\sigma_i(j)}=0.\end{cases}\]
Then, since $\sigma_i\circ\sigma_j^{-1}\in\cS_{n,\mathrm{bal}}$ for each $i\neq j\in[k]$, we must have $v_i \perp v_j$, since
\begin{align*}
    v_i^\top v_j
    &=\sum_{\ell=1}^n \left(\One{A_{\sigma_i(\ell)}=A_{\sigma_j(\ell)}} - \One{A_{\sigma_i(\ell)}\neq A_{\sigma_j(\ell)}}\right) \\
    &=\sum_{\ell=1}^n \left(\One{A_{\sigma_i\circ\sigma_j^{-1}(\ell)}=A_\ell} - \One{A_{\sigma_i\circ\sigma_j^{-1}(\ell)}\neq A_{\ell}}\right) \\
    &= n/2 -n/2 =0,
\end{align*}
where the last step holds since $\sigma_i\circ\sigma_j^{-1}\in\cS_{n,\mathrm{bal}}$, and the next-to-last step holds by replacing $\ell$ with $\sigma_j^{-1}(\ell)$ in the summation.
Moreover, $v_i\perp\mathbf{1}_n$, since $A$ contains equal numbers of $1$'s and $0$'s. Therefore, the vectors $\mathbf{1}_n,v_1,\dots,v_k\in\{\pm 1\}^n$ are mutually orthogonal, meaning that we must have $k+1\leq n$. Moreover, if $k=n-1$, then $H_n = (\mathbf{1}_n \, v_1 \, \dots \, v_{n-1})\in\{\pm 1\}^{n\times n}$ is a Hadamard matrix.

For the converse, suppose there exists a Hadamard matrix $H_n\in\{\pm1\}^{n\times n}$. We write $v_0,v_1,\dots,v_{n-1}\in\{\pm1\}^n$ to denote the columns of this matrix. Define $\tilde{H}_n = \mathrm{diag}(v_0)\cdot H_n$, which is also a Hadamard matrix, and now has first column $\mathbf{1}_n$. Let the subsequent columns of $\tilde{H}_n$ be $\tilde{v}_1,\dots,\tilde{v}_{n-1}$. For each $i=1,\dots,n-1$, let $\sigma_i$ be any permutation such that  the vector $(A_{\sigma_i(1)},\dots,A_{\sigma_i(n)})$ agrees with the vector $\tilde{v}_i$ (i.e., $A_{\sigma_i(j)}=1$ or $=0$, corresponds to $(\tilde{v}_i)_j=+1$ or $=-1$); note that $\tilde{v}_i$ must contain exactly $n/2$ entries of each sign, since $\tilde{v}_i\perp\mathbf{1}_n$, and therefore such a permutation must exist. Then by construction, the collection $\sigma_1,\dots,\sigma_{n-1}$ satisfies that $\sigma_i\circ\sigma_j^{-1}\in\cS_{n,\mathrm{bal}}$ for each $i\neq j$, since $\tilde{v}_i\perp\tilde{v}_j$.
\end{proof}

\end{document}